\newcommand{\define}{\stackrel{\triangle}{=}}
\newcommand{\spa}{\mbox{span}}
\newtheorem{theorem}{\bf Theorem}
\newcommand{\xH}{\mathbf{H}}
\newcommand{\xX}{\mathbf{X}}
\newcommand{\xY}{\mathbf{Y}}
\newcommand{\xZ}{\mathbf{Z}}
\newcommand{\xV}{\mathbf{V}}
\newcommand{\xe}{\mathbf{e}}
\newcommand{\xw}{\mathbf{w}}
\newcommand{\barxV}{\mathbf{\bar{V}}}
\newcommand{\barxH}{\mathbf{\bar{H}}}
\newcommand{\barxX}{\mathbf{\bar{X}}}
\newcommand{\barxY}{\mathbf{\bar{Y}}}
\newcommand{\barxZ}{\mathbf{\bar{Z}}}
\begin{document}
\setcounter{page}{1}
\title{{Interference Alignment and Spatial Degrees of Freedom for the $K$ User Interference Channel}}
\author{\authorblockN{Viveck R. Cadambe, Syed A. Jafar}
\authorblockA{Electrical Engineering and Computer Science\\
University of California Irvine, \\
Irvine, California, 92697, USA\\
Email: {vcadambe@uci.edu, syed@uci.edu}\\ \vspace{-1cm}}}

\maketitle
\thispagestyle{empty}
\begin{abstract} 
While the best known outerbound for the $K$ user interference channel states that there cannot be more than $K/2$ degrees of freedom, it has been conjectured that in general the constant interference channel with any number of users has only one degree of freedom. In this paper, we explore the spatial degrees of freedom per orthogonal time and frequency dimension for the $K$ user wireless interference channel where the channel coefficients take distinct values across frequency slots but are fixed in time. We answer five closely related questions. First, we show that $K/2$ degrees of freedom can be achieved by channel design, i.e. if the nodes are allowed to choose the best constant, finite and nonzero channel coefficient values. Second, we show that if channel coefficients can not be controlled by the nodes but are selected by nature, i.e., randomly drawn from a continuous distribution, the total number of spatial degrees of freedom for the $K$ user interference channel is  almost surely $K/2$ per orthogonal time and frequency dimension. Thus, only half the spatial degrees of freedom are lost due to distributed processing of transmitted and received signals on the interference channel. Third, we show that interference alignment and zero forcing suffice to achieve all the degrees of freedom in all cases. Fourth, we show that the degrees of freedom $D$ directly lead to an $\mathcal{O}(1)$ capacity characterization of the form $C(SNR)=D\log(1+SNR)+\mathcal{O}(1)$ for the multiple access channel, the broadcast channel, the $2$ user interference channel, the $2$ user MIMO $X$ channel and the $3$ user interference channel with $M>1$ antennas at each node. It is not known if this relationship is true for all networks in general, and the $K$ user interference channel with a single antenna at all nodes in particular. Fifth, we consider the degree of freedom benefits from cognitive sharing of messages on the $3$ user interference channel.  If only one of the three messages is made available non-causally to all the nodes except its intended receiver the degrees of freedom are not increased. However, if two messages are shared among all nodes (except their intended receivers) then there are two degrees of freedom. We find that unlike the $2$ user interference channel, on the $3$ user interference channel a cognitive transmitter is not equivalent to a cognitive receiver from a degrees of freedom perspective. If one receiver has cognitive knowledge of all the other users' messages the degrees of freedom are the same as without cognitive message sharing. However, if one transmitter has cognitive knowledge of all the other users' messages then the degrees of freedom are increased from $3/2$ to $2$. 
\end{abstract}

\newpage

\section{Introduction}
The capacity of ad-hoc wireless networks is the much sought afer ``holy-grail" of network information theory \cite{Toumpis_tutorial}. While capacity characterizations have been found for centralized networks (Gaussian multiple access and broadcast networks with multiple antennas), similar capacity characterizations for most distributed communication scenarios (e.g. interference networks) remain long standing open problems. In the absence of precise capacity characterizations, researchers have pursued asymptotic and/or approximate capacity characterizations. Recent work has found the asymptotic scaling laws of network capacity as the number of nodes increases in a large network \cite{Gupta_Kumar_achievable,Ozgur_Leveque_Tse}. However, very little is known about the capacity region of smaller (finite) decentralized networks.  An important step in this direction is the recent approximate characterization of the capacity region of the  $2$ user interference channel that is accurate within one bit of the true capacity region\cite{Etkin_Tse_Wang}. Approximate characterizations of capacity regions would also be invaluable for most open problems in network information theory and may be the key to improving our understanding of wireless networks.

It can be argued that the most preliminary form of capacity characterization for a network is to characterize its degrees of freedom. The degrees of freedom represent the rate of growth of network capacity with the $\log$ of the signal to noise ratio (SNR). In most cases, the spatial degrees of freedom turn out to be the number of non-intefering paths that can be created in a wireless network through signal processing at the transmitters and receivers. While time, frequency and space all offer degrees of freedom in the form of orthogonal dimensions over which communication can take place, spatial degrees of freedom are especially interesting in a distributed network. Potentially a wireless network may have as many spatial  dimensions as the number of transmitting and receiving antennas. However, the ability to access and resolve spatial dimensions  is limited by the distributed nature of the network. Therefore, characterizing the degrees of freedom for distributed wireless networks is by itself a non-trivial problem. For example, consider an interference network with $n$ single-antenna transmitters and $n$ single-antenna receivers where each transmitter has a message for its corresponding receiver. For $n=2$ it is known that this interference network has only 1 degree of freedom \cite{MadsenIT,Jafar_dof_int}. There are no known results to show that more than $1$ degrees of freedom are achievable on the interference channel with any number of users. It is conjectured in \cite{Nosratinia-Madsen} that the $K$ user interference channel has only $1$ degree of freedom. Yet, the best known outerbound for the number of degrees of freedom with $K$ interfering nodes is $K/2$, also presented in \cite{Nosratinia-Madsen}. The unresolved gap between the inner and outerbounds highlights our lack of understanding of the capacity of wireless networks because even the number of degrees of freedom, which is the most basic characterization of the network capacity, remains an open problem. It is this open problem that we pursue in this paper. 

To gain a better understanding of the interference channel, we first consider the possibility that the transmitters and receivers can place themselves optimally, i.e., the nodes can choose their channels. Thus, the first objective of this paper is to answer the question:

{\bf Question 1: }\emph{What is the maximum number of degrees of freedom for the $K$ user interference channel if we are allowed to choose the best (finite and non-zero) channel coefficient values ?}

While the scenario above offers new insights, in practice it is more common that the channel coefficients are chosen by nature. The nodes control their coding schemes, i.e. the transmitted symbols, but not the channel coefficients, which may be assumed to be randomly drawn from a continuous distribution and causally known to all the nodes. In this context we ask the main question of this paper:

{\bf Question 2: }\emph{What is the number of degrees of freedom for the $K$ user interference channel per orthogonal time and frequency dimension ?} 

Note that the normalization by the number of orthogonal time and frequency dimensions is necessary because we wish to characterize the spatial degrees of freedom.

Spatial degrees of freedom have been characterized for several multiuser communication scenarios with multiple antenna nodes. The $(M,N)$ point to point MIMO channel has $\min(M,N)$ degrees of freedom \cite{Foschini_Gans,Telatar}, the $(M_1,M_2, N)$ multiple access channel has $\min(M_1+M_2,N)$ degrees of freedom \cite{Tse_Viswanath_Zheng}, the $(M, N_1, N_2)$ broadcast channel has $\min(M, N_1+N_2)$ degrees of freedom \cite{Yu_Cioffi,Viswanath_Tse_BC,Vishwanath_Jindal_Goldsmith}, and the $(M_1,M_2,N_1,N_2)$ interference channel has $\min(M_1+M_2,N_1+N_2,\max(M_1,N_2),\max(M_2,N_1))$ degrees of freedom \cite{Jafar_dof_int}, where $M_i$ (or $M$ when only one transmitter is present) and $N_i$ (or $N$ when only one receiver is present) indicate the number of antennas at the $i^{th}$ transmitter and receiver, respectively. 

If one tries to extrapolate these results into an understanding of the degrees of freedom for fully connected (all channel coefficients are non-zero) wireless networks with a finite number of nodes, one could arrive at the following (incorrect) intuitive inferences:
\begin{itemize}
\item \emph{The number of degrees of freedom for a wireless network with perfect channel knowledge at all nodes is an integer}.
\item \emph{The degrees of freedom of a wireless network with a finite number of nodes is not higher than the maximum number of co-located antennas at any node.}
\end{itemize}
The degrees of freedom characterizations for the point to point, multiple access, broadcast and interference scenarios described above are all consistent with both these statements. Note that the results of \cite{Boelcskei_Nabar_Oyman_Paulraj} indicate that even with single antenna sources, destinations and relay nodes  the network can have more than one degree of freedom. However, for this distributed orthogonalization result it is assumed that the number of relay nodes approaches infinity. Thus it does not contradict the intuition above which is for finite networks. Multihop networks with half-duplex relay nodes may also lead to fractional degrees of freedom due to the normalization associated with the half-duplex constraint. This is typically because of the absence of a direct link across hops, i.e. some channel coefficients are zero. For multihop networks with orthogonal hops \cite{Borade_Zheng_Gallager} has shown that the full $N$ degrees of freedom are achievable even if each intermediate hop consists of $N$ (distributed) single antenna relay nodes as long as the initial source node and the final destination nodes are equipped with $N$ antennas each. Note that the result of \cite{Borade_Zheng_Gallager} is also consistent with the inferences described above. Also, we note that channels with specialized structures or cooperation among nodes may be able to achieve higher degrees of freedom than channels whose coefficients are randomly selected from continuous distributions \cite{Lapidoth_Shamai_Wigger_IN}.

Perhaps biased by these results, most work on degrees of freedom for wireless networks has focused on either networks where some nodes are equipped with multiple antennas \cite{Jafar_dof_int, Borade_Zheng_Gallager} or networks with single antenna nodes where some form of cooperation opens up the possibility that the single antenna nodes may be able to achieve MIMO behavior \cite{MadsenIT, Nosratinia-Madsen, Devroye_Sharif}. 
Networks of single antenna nodes with no cooperation between the transmitters or receivers could be considered uninteresting from the degrees of freedom perspective as the above mentioned intuitive statements would suggest that these networks could only have 1 degree of freedom. In other words, one might argue that with a single antenna at each node it is impossible to avoid interference and therefore it is impossible to create multiple non-interfering paths necessary for degrees of freedom. 
The $2$ user interference network with a single antenna at each node is a good example of a network which adheres to all the above intuitive inferences, where indeed it can be rigorously shown that there is only one degree of freedom. Studying a $K$ user interference channel where all channel coefficients are equal will also lead to only one degree of freedom, as will the $K$ user interference channel with i.i.d. channel coefficients and no knowledge of channel coefficients at the transmitters \cite{Boelcskei_Nabar_Oyman_Paulraj}. Similarly, if all receivers observe signals that are degraded versions of, say, receiver 1's signal then it can again be argued that the MAC sum capacity when receiver 1 decodes all messages is an outerbound to the interference channel sum capacity (Carleial's outerbound \cite{Carleial_int}). Thus the degrees of freedom cannot be more than the number of antennas at receiver $1$. Finally, the conjecture that the $K$ user interference channel has only $1$ degree of freedom is also consistent with this intuition \cite{Nosratinia-Madsen}.

Clear evidence that the intuitive conclusions mentioned above do not apply to \emph{all} wireless networks is provided by the recent degree of freedom region characterization for the 2 user $X$ channel in \cite{MMK,MMKreport1, MMKreport2, Jafar_Shamai}. The 2-user $X$ channel is identical to the 2-user interference channel with the exception that each transmitter in the $X$ channel has an independent message for each receiver. Thus, unlike the interference channel which has only 2 messages, the $X$ channel has $4$ messages to be communicated between two transmitters and two receivers. Surprisingly, it was shown in \cite{Jafar_Shamai} that the $X$ channel, with only a single antenna at all nodes has $4/3$ degrees of freedom per orthogonal time/frequency dimension if the channels are time/frequency selective. This is interesting for several reasons. First, it shows that the degrees of freedom can take non-integer values. Second, it shows that the degrees of freedom of a distributed wireless network can be higher than the maximum number of co-located antennas at any node in the network. Finally, the achievability proof for the non-integer degrees of freedom for the $X$ channel uses the novel concept of inteference-alignment \cite{MMKreport1, arxiv_dofx2,MMKreport2, Jafar_Shamai}. Interference alignment refers to the simple idea that signal vectors can be aligned in such a manner that they cast overlapping shadows at the receivers where they constitute interference while they continue to be distinct at the receivers where they are desired. The possibility of implicit interference alignment was first observed by Maddah-Ali, Motahari and Khandani in \cite{MMKreport1}. The first explicit interference alignment scheme was presented in \cite{arxiv_dofx2} where it was shown to be sufficient to achieve the full degrees of freedom for the MIMO $X$ channel. Interference alignment was subsequently used in \cite{MMKreport2,Jafar_Shamai} to show achievability of all points within the degrees of freedom region of the MIMO $X$ channel. Interference alignment was also independently discovered in the context of the compound broadcast channel in \cite{Weingarten_Shamai_Kramer}.

Since the distinction between the $X$ channel and the interference channel is quite significant, it is not immediately obvious whether the results found for the $X$ channel have any implications for the interference channel. For instance, the achievability schemes with inteference alignment proposed in \cite{MMKreport2} utilize the broadcast and multiple access channels inherent in the $X$ channel. However, the interference channel does not have broadcast and multiple access components as each transmitter has a message for only one unique receiver. Therefore, in this paper we answer the following question.

{\bf Question 3: } \emph{What are the degrees of freedom benefits from interference alignment on the $K$ user interference network?}

The degrees of freedom can be viewed as a capacity characterization that is accurate to within $o(\log(\rho))$ where $\rho$ represents the signal to noise ratio (SNR). In order to pursue increasingly accurate capacity characterizations, in this paper we explore the notion of $\mathcal{O}(1)$ capacity of a network. The $\mathcal{O}(1)$ capacity is an approximation accurate to within a bounded constant of the actual capacity region. The constant term can depend only on the channel gains and is independent of the transmit powers of the users. The $\mathcal{O}(1)$ capacity  is a more accurate description of the network capacity than the degrees of freedom of a network. Interestingly, for the point to point MIMO channel the $\mathcal{O}(1)$ capacity $\overline{C}(\rho)$ is directly related to the degrees of freedom $D$ as $\overline{C}(\rho)=D\log(1+\rho)$. This leads us to the third set of questions that we pursue in this paper.

{\bf Question 4: }\emph{Is the $\mathcal{O}(1)$ capacity $\overline{C}(\rho)$ of the  multiple access and  broadcast channels,  as well as the $2$ user interference and $X$ channels related to the degrees of freedom $D$ as $\overline{C}(\rho)=D\log(1+\rho)$? Does the same relationship hold for the $K$ user interference channel?}

Finally, we explore the benefits in terms of degrees of freedom, from the cognitive sharing of messages on the interference channel. Based on the cognitive radio model introduced in  \cite{Devroye_Mitran_Tarokh, Devroye_Mitran_Tarokh_Mag,Jovicic_Viswanath} cognitive message sharing refers to the form of cooperation where a message is made available non-causally to some transmitters and/or receivers besides the intended source and destination of the message. It was shown in \cite{Devroye_Sharif} that for the $2$ user interference channel with single antennas at each node, cognitive message sharing (from one transmitter to another) does not produce any gain in the degrees of freedom. The result was extended in \cite{Jafar_Shamai} to the $2$ user interference channel with \emph{multiple} antenna nodes and equal number of antennas at each node, to show that there is no gain in degrees of freedom whether a message is shared with the transmitter, receiver or both the transmitter and receiver of the other user. \cite{Jafar_Shamai} also establishes an interesting duality relationship where it is shown that from the degrees of freedom perspective  \emph{cognitive transmitters are equivalent to cognitive receivers}, i.e. sharing a message with another user's transmitter is equivalent to sharing a message with the other user's receiver. It is not clear if similar results will hold for the $3$ user interference channel, and it forms the last set of questions that we address in this paper.

{\bf Question 5: } \emph{For the $3$ user interference channel, what are the benefits of cognitive message sharing? Are cognitive transmitters equivalent to cognitive receivers in the manner shown for the $2$ user interference channel? }

\subsection{Overview of Results}
The answer to the first question is provided in Section \ref{sec:KK} and may be summarized in general terms as follows:

``\emph{Regardless of how many speakers and listeners are located within earshot of each other, each speaker can speak half the time and be heard without any interference by its intended listener}''

This result may seem impossible at first. For example, how can a total duration of $1$ hour be shared by 100 speakers such that each speaker speaks for $30$ minutes and is heard interference free by its intended listener when all systems are located within earshot of each other? And yet, this seemingly impossible result is made possible by the concept of interference alignment. A simple scheme is explained in Section \ref{sec:KK} where it is assumed the speakers and listeners can choose their locations. 

The answer to the second question is provided in Theorems \ref{theorem:dofach} and \ref{theorem:dofr}. We show that the $K$ user interference channel with single antennas at all nodes has (almost surely) a total of $K/2$ degrees of freedom per orthgonal time and frequency dimension when the channels are drawn randomly from a continuous distribution. The implications of this result for our understanding of the capacity of wireless networks are quite profound. It shows that we have grossly underestimated the capacity of wireless networks. For example, at high SNR the true capacity is higher by 50\%, 900\%, and 4900\% than anything previously shown to be achievable for networks with $3$, $20$, and $100$ interfering users, respectively. Interference is one of the principal challenges faced by wireless networks. However, we have shown that with perfect channel knowledge the frequency selective interference channel is not interference limited. In fact, after the first two users, additional users do not compete for degrees of freedom and each additional user is able to achieve $1/2$ degree of freedom without hurting the previously existing users. What makes this result even more remarkable is that linear scaling of degrees of freedom with users is achieved without cooperation in the form of message sharing that may allow MIMO behavior. Note that it has been shown previously for the $2$ user interference channel that unidirectional message sharing (e.g. from transmitter $1$ to transmitter $2$) does not allow higher degrees of freedom \cite{Devroye_Sharif,Jafar_Shamai} and even bi-directional message sharing (through full duplex noisy channels between the transmitters and full duplex noisy channels between the receivers) will not increase the degrees of freedom if the cost of message sharing is considered \cite{Nosratinia-Madsen, Madsen_CTW}. Therefore it is quite surprising that the $K$ user interference channel has $K/2$ degrees of freedom even without any message sharing. To summarize, Theorem \ref{theorem:dofach} shows that {\it only half the degrees of freedom are lost due to distributed processing at the transmitters and receivers on the interference channel}.


The answer to question 3 is provided by the achievability proof for Theorem \ref{theorem:dofach} where we find that, similar to the $2$ user $X$ channel, interference alignment suffices to achieve all the degrees of freedom on the $K$ user interference channel as well. Thus, interference alignment is as relevant for the $K$ user interference channel where it achieves the full $K/2$ degrees of freedom, as it is for the $2$ user $X$ channel where it achieves the full $4/3$ degrees of freedom. Interestingly, interference alignment does for wireless networks what MIMO technology has done for the point to point wireless channel. In both cases the capacity, originally limited to $\log(1+SNR)$, is shown to be capable of linearly increasing with the number of antennas. While MIMO technology requires nodes equipped with multiple antennas, interference alignment works with the distributed antennas naturally available in a network across the interfering transmitters and receivers.

\begin{figure}[!tbp]
\centerline{\input{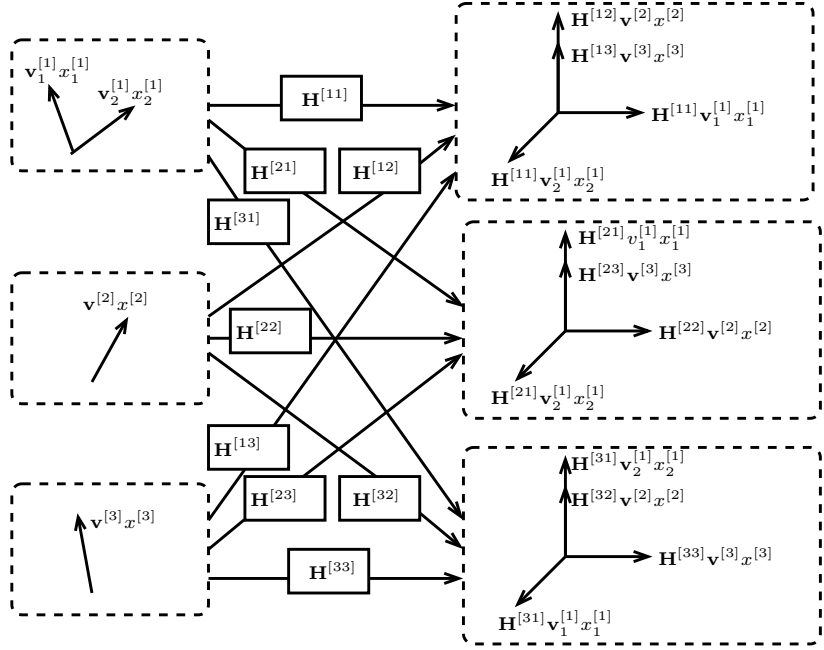}}
\caption{Interference alignment on the $3$ user interference channel to achieve $4/3$ degrees of freedom}
\label{figure:4by3}
\end{figure}

Figure \ref{figure:4by3} shows how interference alignment applies to the $3$ user interference channel. In this figure we illustrate how $4$ degrees of freedom are achieved over a 3 symbol extension of the channel with $3$ single antenna users, so that a total of $4/3$ degrees of freedom are achieved per channel use. The achievability proof for $3/2$ degrees of freedom is more involved and is provided in Section \ref{proof:dofach}. User $1$ achieves $2$ degrees of freedom by transmitting two independently coded streams along the beamforming vectors ${\bf v}^{[1]}_1, {\bf v}^{[1]}_2$ while users $2$ and $3$ achieve one degree of freedom by sending their independently encoded data streams along the beamforming vectors ${\bf v}^{[2]}, {\bf v}^{[3]}$, respectively. The beamforming vectors are chosen as follows.
\begin{itemize} 
\item At receiver $1$, the interference from transmitters $2$ and $3$ are perfectly aligned. 
\item At receiver $2$, the interference from transmitter $3$ aligns itself along one of the dimensions of the two-dimensional interference signal from transmitter $1$. 
\item Similarly, at receiver $3$, the interference from transmitter $2$ aligns itself along one of the dimensions of interference from transmitter $1$.
\end{itemize}

For the answer to question $4$, we show that for the multiple access, broadcast, and $2$ user interference and $X$ channels, the total degrees of freedom $D$ and the $\mathcal{O}(1)$ capacity $\overline{C}(\rho)$ are indeed related as $\overline{C}(\rho)=D\log(1+\rho)$. Thus, the two descriptions are equivalent. However, for the $3$ user interference channel with single antenna nodes it appears unlikely that such a relationship exists. The evidence in this paper raises the interesting possibility that the sum capacity of the $K$ user interference channel with single antenna nodes may have a different form than the multiple access, broadcast, and $2$ user interference and $X$ channels in that the difference between the true capacity $C(\rho)$ and the degrees of freedom approximation $D\log(1+\rho)$ may not be bounded. 

Finally, the answer to question $5$ is provided in Theorem \ref{theorem:cogint}. We show that sharing one message with all other transmitters and/or receivers does not increase the degrees of freedom for the $3$ user interference channel. Sharing two messages with all other transmitters and/or receivers on the other hand raises the degrees of freedom from $3/2$ to $2$. Interestingly, we find that the equivalance established between the cognitive transmitters and cognitive receivers on the $2$ user interference channel does not directly apply to the $3$ user interference channel. Intuitively, this may be understood as follows. A cognitive transmitter on the $3$ user interference channel can be more useful than a cognitive receiver. This is because a cognitive transmitter with no message of its own can still increase the degrees of freedom by canceling interference from its cognitively acquired message at other receivers. In other words a cognitive transmitter with no message of its own, still lends a transmit antenna to the transmitter whose message it shares. On the other hand, a cognitive receiver with no message of its own is useless.
\section{System Model}
Consider the $K$ user interference channel, comprised of $K$ transmitters and $K$ receivers. We assume coding may occur over multiple orthogonal frequency and time dimensions and the rates as well as the degrees of freedom are normalized by the number of orthogonal time and frequency dimensions. Each node is equipped with only one antenna (multiple antenna nodes are considered later in this paper). The channel output at the $k^{th}$ receiver over the $f^{th}$ frequency slot and the $t^{th}$ time slot is described as follows:
\begin{eqnarray*}
Y^{[k]}(f,t)&=&H^{[k1]}(f)X^{[1]}(f,t)+H^{[k2]}(f)X^{[2]}(f,t)+\cdots+H^{[kK]}(f)X^{[K]}(f,t)+Z^{[k]}(f,t)
\end{eqnarray*}
where, $k\in\{1,2,\cdots,K\}$ is the user index, $f\in\mathbb{N}$ is the frequency slot index, $t\in\mathbb{N}$ is the time slot index, $Y^{[k]}(f,t)$ is the output signal of the $k^{th}$ receiver, $X^{[k]}(f,t)$ is the input signal of the $k^{th}$ transmitter, $H^{[kj]}(f)$ is the channel fade coefficient from transmitter $j$ to receiver $k$ over the $f^{th}$ frequency slot and $Z^{[k]}(f,t)$ is the additive white Gaussian noise (AWGN) term at the $k^{th}$ receiver. The channel coefficients vary across frequency slots but are assumed constant in time. We assume all noise terms are i.i.d. (independent identically distributed) zero mean complex Gaussian with unit variance. We assume all channel coefficients $H^{[kj]}(f)$ are known a-priori to all transmitters and receivers. Note that since the channel coefficients do not vary in time, only causal channel knowledge is required. If we allow non-causal channel knowledge then the channel model above is equivalently represented as coding entirely in the time domain, i.e. over only one frequency slot. To avoid degenerate channel conditions (e.g. all channel coefficients are equal or channel coefficients are equal to either zero or infinity) we assume that the channel coefficient values are drawn i.i.d.  from a continuous distribution and the absolute value of all the channel coefficients is bounded between a non-zero minimum value and a finite maximum value. Since the channel values are assumed constant in time, the time index $t$ is sometimes suppressed for compact notation.

We assume that transmitters $1,2,\cdots,K$ have independent messages $W_1,W_2,\cdots,W_K$ intended for receivers $1,2,\cdots,K$, respectively. The total power across all transmitters is assumed to be equal to $\rho$ per orthogonal time and frequency dimension. We indicate the size of the message set by $|W_i(\rho)|$. For codewords spanning $f_0\times t_0$ channel uses (i.e. using $f_0$ frequency slots and $t_0$ time slots), the rates $R_i(\rho)=\frac{\log|W_i(\rho)|}{f_0t_0}$ are achievable if the probability of error for all messages can be simultaneously made arbitrarily small by choosing an appropriately large $f_0t_0$. 

The capacity region $\mathcal{C}(\rho)$ of the three user interference channel is the set of all \emph{achievable} rate tuples ${\bf R}(\rho)=(R_1(\rho), R_2(\rho), \cdots, R_K(\rho))$. 

\subsection{Degrees of Freedom}
Similar to the degrees of freedom region definition for the MIMO $X$ channel in \cite{Jafar_Shamai} we define the degrees of freedom region $\mathcal{D}$ for the $K$ user interference channel as follows:
\begin{align}
\mathcal{D}=\bigg\{&(d_1,d_2,\cdots, d_K)\in\mathbb{R}^K_+: \forall (w_1,w_2,\cdots,w_K)\in \mathbb{R}^K_+ \nonumber \\
&w_1d_1+w_2d_2+\cdots+w_Kd_K\leq \limsup_{\rho\rightarrow\infty}\left[\sup_{{\bf R}(\rho)\in\mathcal{C}(\rho)}[w_1R_1(\rho)+w_2R_2(\rho)+\cdots+w_KR_K(\rho)]\frac{1}{\log(\rho)}\right]\bigg\}
\end{align}

\section{Interference Alignment through Channel Design}\label{sec:KK}
With the exception of this section, throughout this paper we assume that the channel coefficients are determined by nature, i.e. we do not control the channel values, and we only control the coding scheme, i.e. the transmitted symbols. However, in this section we take a different perspective to gain additional insights into the problem. We wish to know what is the best we can do if we are allowed to pick all the channel coefficient values subject to the only constraint that the coefficient values are finite, non-zero constants. It is important that we can only pick non-zero channel coefficient values because the $K/2$ outerbound applies if and only if all channel coefficients have non-zero values. For example, if we are allowed to set some channel coefficients to zero the problem becomes trivial because by setting all interfering links to zero we can easily achieve $K$ degrees of freedom over the $K$ non-interfering channels. 

\subsection{Interference Alignment by Choice of Channel Coefficients}
As we show next, we can achieve $K/2$ degrees of freedom for the $K$ user interference channel with non-zero channel coefficients if we are allowed to pick the values of the channel coefficients. The proof is quite simple. We consider a two symbol extension of the channel, i.e. coding over two frequency slots, where the channel is defined by $2\times2$ diagonal channel matrices that we choose as follows
\begin{eqnarray}
{\bf H}^{[ij]}&=&\left[\begin{array}{cc}1&0\\0&-1\end{array}\right] ~~\mbox{if} ~~i\neq j\\
{\bf H}^{[ij]}&=&\left[\begin{array}{cc}1&0\\0&1\end{array}\right] ~~\mbox{if} ~~i=j
\end{eqnarray}
Each user transmits his coded symbols along the beamforming vector
\begin{eqnarray}
{\bf v}^{[i]}=\left[\begin{array}{c}1\\1\end{array}\right]
\end{eqnarray}
This ensures that all the interference terms at each receiver appear along the direction vector $[1~~~ -1]^T$ while the desired signal at each receiver appears along the direction $[1 ~~~ 1]^T$. Thus the desired signal and interference are orthogonal so that each user is able to achieve one degree of freedom for his message. Since $K$ degrees of freedom are achieved over the $2$ symbol extension of the channel the degrees of freedom equal $K/2$. Thus, it is interesting to note that the $K/2$ outerbound is tight for some interference channels with non-zero channel coefficients. Since joint processing at all transmitters and at all receivers would result in $K$ degrees of freedom on the $K$ user interference channel, we observe that if we are allowed to pick the channel coefficients then \emph{the maximum penalty for distributed signal processing is the loss of half the degrees of freedom}. It remains to be shown if this bound is tight when channel coefficients are chosen by nature, i.e. modeled as random variables drawn from a continuous distribution. As we show in the next section, the outerbound of $K/2$ is almost surely tight for the $K$ user interference channel.

\subsection{Interference alignment through choice of propagation delays - Can everyone speak half the time with no interference?}
We end this section with another interesting example of interference alignment.
Consider the $K$ user interference channel where there is a propagation delay from each transmitter to each receiver. Let $T_{ij}$ represent the signal propagation delay from transmitter $i$ to receiver $j$. Suppose the locations of the transmitters and receivers can be configured such that the delay $T_{ii}$ from each transmitter  to its intended receiver is an even multiple of a basic symbol duration $T_s$, while the signal propagation delays $T_{ij}, (i\neq j)$ from each transmitter to all unintended receivers are odd multiples of the symbol duration. The communication strategy is the following. All transmissions occur simultaneously at even symbol durations. Note that with this policy, each receiver sees its own transmitter's signal interference-free over even time periods, while it sees all interfering signals simultaneously over odd time periods. Thus \emph{each user is able to achieve $1/2$ degrees of freedom and the total degrees of freedom achieved is equal to $K/2$}. 

\section{ Degrees of Freedom for the $K$ User Interference Channel - Interference Alignment through Precoding }
Henceforth, we assume that the channel coefficients are not controlled by the nodes but rather selected by nature. Thus, the channels do not automatically align the interference and any interference alignment can only be accomplished through code design. The following theorem presents the main result of this section.
\begin{theorem}\label{theorem:dofach}
The number of degrees of freedom for the $K$ user interference channel with single antennas at all nodes is $K/2$.
\begin{eqnarray}
\max_{{\bf d}\in\mathcal{D}} d_1+d_2+\cdots+d_K&=& K/2
\end{eqnarray}
\end{theorem}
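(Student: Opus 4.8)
The plan is to prove the two inequalities $\max_{\mathbf{d}\in\mathcal{D}}\sum_i d_i\le K/2$ (converse) and $\ge K/2$ (achievability) separately; essentially all the difficulty is in the achievability direction, so the converse is dispatched first.

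For the converse, I would argue pairwise. Fix any two users $i\neq j$ and let a genie provide the messages $\{W_m:m\neq i,j\}$ to every receiver. Receivers $i$ and $j$ can then regenerate and subtract the transmitted signals $X^{[m]}(f,t)$ for $m\neq i,j$, which leaves a two-user interference channel between transmitters $\{i,j\}$ and receivers $\{i,j\}$ whose coefficients are still generic; invoking the known fact that the single-antenna two-user interference channel has exactly one degree of freedom \cite{MadsenIT,Jafar_dof_int} gives $d_i+d_j\le 1$. Summing over all $\binom{K}{2}$ unordered pairs and using that each $d_i$ appears in $K-1$ of them yields $(K-1)\sum_i d_i\le\binom{K}{2}$, i.e. $\sum_i d_i\le K/2$; this handles odd $K$ as well, with no separate case. (Alternatively one may just cite the outer bound of \cite{Nosratinia-Madsen}.)

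For achievability I would use interference alignment through precoding over a symbol extension. Put $\Gamma=(K-1)(K-2)$ and, for $n\in\mathbb{N}$, code over a $\mu_c$-symbol extension with $\mu_c=(n+1)^{\Gamma}+n^{\Gamma}$, so the channels become $\mu_c\times\mu_c$ diagonal matrices $\mathbf{H}^{[kj]}$. Transmitter $1$ sends $(n+1)^{\Gamma}$ independent streams and each transmitter $k\ge 2$ sends $n^{\Gamma}$ independent streams. Let $\mathbf{w}=[1\ 1\ \cdots\ 1]^{T}$, and for each of the $(K-1)(K-2)$ ordered pairs $(i,j)$ with $i,j\in\{2,\dots,K\}$, $i\neq j$, set the diagonal matrix $\mathbf{T}^{[ij]}=(\mathbf{H}^{[i1]})^{-1}\mathbf{H}^{[ij]}(\mathbf{H}^{[1j]})^{-1}$. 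Take $\mathbf{B}$ to be the set of vectors $\big(\prod_{(i,j)}(\mathbf{T}^{[ij]})^{\alpha_{ij}}\big)\mathbf{w}$ with exponents $\alpha_{ij}\in\{1,\dots,n\}$, and $\mathbf{C}$ the analogous set with $\alpha_{ij}\in\{1,\dots,n+1\}$; then choose $\mathrm{span}(\mathbf{V}^{[1]})=\mathbf{C}$ and $\mathrm{span}(\mathbf{V}^{[j]})=(\mathbf{H}^{[1j]})^{-1}\mathbf{B}$ for $j\ge 2$. By construction, at receiver $1$ every cross term $\mathbf{H}^{[1j]}\mathbf{V}^{[j]}$, $j\ge 2$, equals the single $n^{\Gamma}$-dimensional space $\mathbf{B}$; and at receiver $i\ge 2$, for every interferer $j\notin\{1,i\}$ one has $\mathbf{H}^{[ij]}\mathbf{V}^{[j]}\subseteq\mathbf{H}^{[i1]}\mathbf{V}^{[1]}$, since left-multiplying a monomial in the $\mathbf{T}$'s by $\mathbf{T}^{[ij]}$ merely raises the exponent $\alpha_{ij}$ from $\{1,\dots,n\}$ into $\{2,\dots,n+1\}\subseteq\{1,\dots,n+1\}$ while the others stay in $\{1,\dots,n+1\}$. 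Thus the interference at each receiver occupies at most $\mu_c$ minus the desired dimension.

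It then remains to verify the non-degeneracy conditions: that the $\mu_c\times\mu_c$ matrices $[\,\mathbf{H}^{[11]}\mathbf{V}^{[1]}\ \big|\ \mathbf{B}\,]$ at receiver $1$ and $[\,\mathbf{H}^{[ii]}\mathbf{V}^{[i]}\ \big|\ \mathbf{H}^{[i1]}\mathbf{V}^{[1]}\,]$ at receivers $i\ge 2$ are nonsingular, so each receiver can zero-force the aligned interference and still recover all its streams. This is the \textbf{main obstacle}: it is an almost-sure statement, and the natural route is to observe that each such determinant is a polynomial in the channel coefficients $\{H^{[kj]}(f)\}$, show it is not identically zero (exploiting that, because all matrices are diagonal, each column is a distinct monomial in the scalar diagonal entries, so a Vandermonde-type / distinct-monomial argument applies, or by exhibiting one explicit coefficient assignment making the columns independent), and then conclude the determinant is a.s. nonzero since the coefficients are drawn from a continuous distribution. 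Granting this, over $\mu_c$ channel uses receiver $1$ gets $(n+1)^{\Gamma}$ and each other receiver $n^{\Gamma}$ interference-free dimensions, for a per-symbol sum of $\big((n+1)^{\Gamma}+(K-1)n^{\Gamma}\big)/\big((n+1)^{\Gamma}+n^{\Gamma}\big)$, which increases to $K/2$ as $n\to\infty$ (the individual $d_i$ all tending to $1/2$). Since this holds for every $n$ and $\mathcal{D}$ is closed, $(\tfrac12,\dots,\tfrac12)\in\mathcal{D}$, giving $\max_{\mathbf{d}\in\mathcal{D}}\sum_i d_i\ge K/2$; combined with the converse, the theorem follows.
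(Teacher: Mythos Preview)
Your proposal is correct and follows essentially the same interference-alignment construction as the paper: user $1$ carries the larger precoding set, the remaining precoders are slaved to it via the alignment constraints $\mathbf{T}^{[ij]}\mathbf{B}\subset\mathbf{C}$, and the non-degeneracy is reduced to a polynomial-not-identically-zero argument based on distinct monomials in the diagonal channel entries. The only noteworthy difference is that you use all $\Gamma=(K-1)(K-2)$ commuting diagonal generators $\mathbf{T}^{[ij]}$, whereas the paper normalizes so that one of them (their $\mathbf{T}^{[2]}_{3}$) collapses to the identity and works with $N=\Gamma-1$; this makes the paper's symbol extension $(n+1)^{N}+n^{N}$ slightly smaller than your $(n+1)^{\Gamma}+n^{\Gamma}$ for each fixed $n$, but both ratios converge to $K/2$, so the limiting statement is unaffected.
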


The converse argument for the theorem follows directly from the outerbound for the $K$ user interference channel presented in \cite{Nosratinia-Madsen}. The achievability proof is presented next. Since the proof is rather involved, we present first the constructive proof for $K=3$. The proof for general $K\geq 3$ is then provided in Appendix \ref{app:K}.
\subsection{Achievability Proof for Theorem \ref{theorem:dofach} with $K=3$}\label{proof:dofach}
We show that $(d_1,d_2,d_3) = (\frac{n+1}{2n+1}, \frac{n}{2n+1}, \frac{n}{2n+1})$ lies in the degrees of freedom region $\forall n \in \mathbb{N}$. Since the degrees of freedom region is closed, this automatically implies that 
$$\max_{(d_1,d_2,d_3) \in \mathcal{D}} d_1+d_2+d_3 \geq \sup_{n} \frac{3n+1}{2n+1} = \frac{3}{2}$$ 
This result, in conjunction with the converse argument proves the theorem.

To show that $ (\frac{n+1}{2n+1}, \frac{n}{2n+1}, \frac{n}{2n+1})$ lies in $\mathcal{D}$, we construct an interference alignment scheme using only $2n+1$ frequency slots. We collectively denote the $2n+1$ symbols transmitted over the first $2n+1$ frequency slots at each time instant as a supersymbol. We call this the $(2n+1)$ symbol extension of the channel. With the extended channel, the signal vector at the $k^{th}$ user's receiver can be expressed as 
$$ \barxY^{[k]}= \barxH^{[k1]} \barxX^{[1]} + \barxH^{[k2]}\barxX^{[2]} + \barxH^{[k3]}\barxX^{[3]} + \barxZ^{[k]}, ~~k=1,2,3.$$
where $\barxX^{[k]}$ is a $(2n+1) \times 1$ column vector representing the $2n+1$ symbol extension of the transmitted symbol $X^{[k]}$, i.e 
$$\barxX^{[k]}(t) \define \left[ \begin{array}{c} X^{[k]}(1,t) \\X^{[k]}(2,t)\\ \vdots \\ X^{[k]}(2n+1,t) \end{array}\right]$$ Similarly $\barxY^{[k]}$ and $\barxZ^{[k]}$ represent $2n+1$ symbol extensions of the $Y^{[k]}$ and $Z^{[k]}$ respectively. 
$\barxH^{[kj]}$ is a diagonal $(2n+1)\times(2n+1)$ matrix representing the $2n+1$ symbol extension of the channel i.e  
$$ \barxH^{[kj]} \define \left[ \begin{array}{cccc}  H^{[kj]}(1) & 0 & \ldots & 0\\
	0 & H^{[kj]}(2) & \ldots & 0\\
	\vdots & \cdots & \ddots & \vdots\\ 
	0 & 0& \cdots  & H^{[kj]}(2n+1) \end{array}\right] $$
Recall that we assume that the channel coefficient values for each frequency slot are chosen independently from a continuous distribution. Thus, all the diagonal channel matrices $\barxH^{[kj]}$ are comprised of all distinct diagonal elements with probability $1$.

We show that $(d_1,d_2,d_3)=(n+1,n,n)$ is achievable on this extended channel implying that $(\frac{n+1}{2n+1}, \frac{n}{2n+1}, \frac{n}{2n+1})$ lies in the degrees of freedom region of the original channel. 

In the extended channel, message $W_1$ is encoded at transmitter $1$ into $n+1$ independent streams $x^{[1]}_m(t), m=1,2,\ldots,(n+1)$ sent along vectors $\mathbf{v}^{[1]}_m$ so that $\barxX^{[1]}(t)$ is
$$\barxX^{[1]}(t) = \displaystyle\sum_{m=1}^{n+1} x^{[1]}_m(t) \mathbf{v}_m^{[1]} = \barxV^{[1]} \xX^{[1]}(t)$$
where $\xX^{[1]}(t)$ is a $(n+1) \times 1$ column vector and $\barxV^{[1]}$ is a $(2n+1) \times (n+1)$ dimensional matrix.
Similarly $W_2$ and $W_3$ are each encoded into $n$ independent streams by transmitters $2$ and $3$ as $\xX^{[2]}(t)$ and $\xX^{[3]}(t)$ respectively.
$$\barxX^{[2]}(t) = \displaystyle\sum_{m=1}^{n}{x}^{[2]}_m(t) \mathbf{v}_m^{[2]} = \barxV^{[2]}\xX^{[2]}(t)$$
$$\barxX^{[3]}(t) = \displaystyle\sum_{m=1}^{n}{x}^{[3]}_m(t) \mathbf{v}_m^{[3]} = \barxV^{[3]} \xX^{[3]}(t)$$
The received signal at the $i^{th}$ receiver can then be written as
$$ \barxY^{[i]}(t) = \barxH^{[i1]} \barxV^{[1]} \xX^{[1]}(t) + \barxH^{[i2]} \barxV^{[2]} \xX^{[2]}(t) + \barxH^{[i3]}\barxV^{[3]} \xX^{[3]}(t) + \barxZ^{[i]}(t)$$

In this achievable scheme, receiver $i$ eliminates interference by zero-forcing all $\barxV^{[j]},j \neq i$ to decode $W_i$. At receiver 1, $n+1$ desired streams are decoded after zero-forcing the interference to achieve $n+1$ degrees of freedom. To obtain $n+1$ interference free dimensions from a $2n+1$ dimensional received signal vector ${\barxY}^{[1]}(t)$, the dimension of the interference should be not more than $n$. This can be ensured by perfectly aligning the interference from transmitters $2$ and $3$ as follows.
\begin{equation} \label{achint:rx1} \barxH^{[12]} \barxV^{[2]} = \barxH^{[13]} \barxV^{[3]} \end{equation}
At the same time,  receiver $2$ zero-forces the interference from $\barxX^{[1]}$ and $\barxX^{[3]}$. To extract $n$ interference-free dimensions from a $2n+1$ dimensional vector, the dimension of the interference has to be not more than $n+1$.
i.e. $$ \mbox{rank}\left(\left[\barxH^{[21]} \barxV^{[1]} ~~~~\barxH^{[23]} \barxV^{[3]}\right]\right) \leq n+1$$
This can be achieved by choosing $\barxV^{[3]}$ and $\barxV^{[1]}$ so that
\begin{equation}\label{achint:rx2} \barxH^{[23]} \barxV^{[3]} \prec \barxH^{[21]} \barxV^{[1]} \end{equation}
where $\mathbf{P} \prec \mathbf{Q}$, means that the set of column vectors of matrix $\mathbf{P}$ is a subset of the set of column vectors of matrix $\mathbf{Q}$.
Similarly, to decode $W_3$ at receiver $3$, we wish to choose $\barxV^{[2]}$ and $\barxV^{[1]}$ so that
\begin{equation} \label{achint:rx3} \barxH^{[32]} \barxV^{[2]} \prec \barxH^{[31]} \barxV^{[1]} \end{equation}
Thus, we wish to pick vectors $\barxV^{[1]}$, $\barxV^{[2]}$ and $\barxV^{[3]}$ so that equations (\ref{achint:rx1}), (\ref{achint:rx2}), (\ref{achint:rx3}) are satisfied.
Note that the channel matrices $\barxH^{[ij]}$ have a full rank of $2n+1$ almost surely. Since multiplying by a full rank matrix (or its inverse) does not affect the conditions represented by equations (\ref{achint:rx1}), (\ref{achint:rx2}) and (\ref{achint:rx3}), they can be equivalently expressed as
\begin{eqnarray}
 \label{achint:rxfirst}
 \mathbf{B} &=& \mathbf{T}\mathbf{C}\\
 \mathbf{B} &\prec& \mathbf{A} \\
 \mathbf{C} &\prec& \mathbf{A} 
 \label{achint:rxlast}
\end{eqnarray}
where 
\begin{eqnarray}
 \label{achint:ABCTfirst}
 \mathbf{A}&=& \barxV^{[1]}\\
 \mathbf{B}&=& (\barxH^{[21]})^{-1} \barxH^{[23]} \barxV^{[3]}\\
 \mathbf{C}&=& (\barxH^{[31]})^{-1} \barxH^{[32]} \barxV^{[2]}\\
 \mathbf{T}&=& \barxH^{[12]} (\barxH^{[21]})^{-1}\barxH^{[23]} (\barxH^{[32]})^{-1}\barxH^{[31]} (\barxH^{[13]})^{-1}
 \label{achint:ABCTlast}
\end{eqnarray}
Note that $\mathbf{A}$ is a $(2n+1) \times (n+1)$ matrix. $\mathbf{B}$ and $\mathbf{C}$ are $(2n+1) \times n$ matrices. Since all channel matrices are invertible, we can choose $\mathbf{A}$, $\mathbf{B}$ and $\mathbf{C}$ so that they satisfy equations (\ref{achint:rxfirst})-(\ref{achint:rxlast}) and then use equations (\ref{achint:ABCTfirst})-(\ref{achint:ABCTlast})  to find $\barxV^{[1]}$,$\barxV^{[2]}$ and $\barxV^{[3]}$. $\mathbf{A}$, $\mathbf{B}$, $\mathbf{C}$ are picked as follows. Let $\mathbf{w}$ be the $(2n+1)\times 1$ column vector 
$$\mathbf{w} = \left[ \begin{array}{c} 1 \\ 1 \\ \vdots \\ 1\end{array} \right]$$
We now choose $\mathbf{A}$, $\mathbf{B}$ and $\mathbf{C}$ as:
\begin{eqnarray*}
\mathbf{A}&=&[\xw\hspace{6pt}\mathbf{T}\xw \hspace{6pt}\mathbf{T}^2\xw \hspace{6pt} \ldots \hspace{6pt} \mathbf{T}^{n}\xw]\\
\mathbf{B}&=&[\mathbf{T}\xw\hspace{6pt}\mathbf{T}^2\xw\hspace{6pt}  \ldots \hspace{6pt} \mathbf{T}^{n}\xw]\\
\mathbf{C}&=&[\xw\hspace{6pt}\mathbf{T}\xw\hspace{6pt}  \ldots\hspace{6pt} \mathbf{T}^{n-1}\xw]
\end{eqnarray*}
It can be easily verified that $\mathbf{A}$, $\mathbf{B}$ and $\mathbf{C}$ satisfy the three equations (\ref{achint:rxfirst})-(\ref{achint:rxlast}). Therefore, $\barxV^{[1]}$, $\barxV^{[2]}$ and $\barxV^{[3]}$ satisfy the interference alignment equations in (\ref{achint:rx1}), (\ref{achint:rx2}) and (\ref{achint:rx3}).

Now, consider the received signal vectors at Receiver $1$. The desired signal arrives along the $n+1$ vectors $\barxH^{[11]} \barxV^{[1]}$ while the interference arrives along the $n$ vectors $\barxH^{[12]} \barxV^{[2]}$ and  the $n$ vectors $\barxH^{[13]} \barxV^{[3]}$. As enforced by equation (\ref{achint:rx1}) the interference vectors are perfectly aligned. Therefore, in order to prove that there are $n+1$ interference free dimensions it suffices to show that the columns of the square, $(2n+1)\times (2n+1)$ dimensional matrix
\begin{eqnarray}
\left[\barxH^{[11]} \barxV^{[1]}~~~~\barxH^{[12]} \barxV^{[2]}\right]
\end{eqnarray}
are linearly independent almost surely. Multiplying by the full rank matrix $(\barxH^{[11]})^{-1}$ and substituting the values of $\barxV^{[1]}, \barxV^{[2]}$, equivalently we need to show that almost surely
\begin{eqnarray}
{\bf S}\define\left[\xw\hspace{6pt}\mathbf{T}\xw \hspace{6pt}\mathbf{T}^2\xw \hspace{6pt} \ldots \hspace{6pt} \mathbf{T}^{n}\xw 
\hspace{6pt} {\bf D}\xw\hspace{6pt}{\bf D}\mathbf{T}\xw \hspace{6pt}{\bf D}\mathbf{T}^2\xw \hspace{6pt} \ldots \hspace{6pt} {\bf D}\mathbf{T}^{n-1}\xw
\right]
\end{eqnarray}
has linearly independent column vectors where ${\bf D}=(\barxH^{[11]})^{-1}\barxH^{[12]}$ is a diagonal matrix. In other words, we need to show $\det({\bf S})\neq 0$ with probability 1. The proof is obtained by contradiction. If possible, let $\mathbf{S}$ be singular with non-zero probability. i.e, $\Pr(|\mathbf{S}|=0) > 0$. Further, let the diagonal entries of ${\bf T}$ be $\lambda_1, \lambda_2, \ldots \lambda_{2n+1}$ and the diagonal entries of ${\bf D}$ be $\kappa_{1}, \kappa_{2} \ldots \kappa_{2n+1}$. Then the following equation is true with non-zero probability.
\begin{eqnarray*}
|\mathbf{S}| = 
\left|
\begin{array}{ccccccccccccc}
1&\lambda_1 & \lambda_1^2 & \ldots & \lambda_1^{n} & \kappa_1 & \kappa_1 \lambda_1 & \ldots &\kappa_1 \lambda_1^{n-1}\\
1&\lambda_2 & \lambda_2^2 & \ldots & \lambda_2^{n} & \kappa_2 & \kappa_2 \lambda_2 & \ldots &\kappa_2 \lambda_2^{n-1}\\
\vdots  &  \vdots & \vdots & \ddots & \vdots & \vdots & \vdots &\ddots & \vdots \\ 
1&\lambda_{2n+1} & \lambda_{2n+1}^2 &  \ldots & \lambda_{2n+1}^{n} & \kappa_{2n+1} & \kappa_{2n+1} \lambda_{2n+1} &\ldots& \kappa_{2n+1} \lambda_{2n+1}^{n-1}\end{array} \right|&=& 0
\end{eqnarray*}
Let $C_{ij}$ indicate the cofactor of the $i$th row and $j$th column of $|\mathbf{S}|$.
Expanding the determinant along the first row, we get
$$ |\mathbf{S}|=0 \Rightarrow C_{11} + \lambda_1C_{12} + \ldots \lambda_1^{n} C_{1(n+1)} + \kappa_1 \left[C_{1(n+2)} + \lambda_1C_{1(n+3)} +\ldots + \lambda_1^{n-1} C_{1(2n+1)}\right] = 0  $$
None of `co-factor' terms $C_{1j}$ in the above expansion depend $\lambda_1$ and $\kappa_1$. If all values other than $\kappa_1$ are given, then the above is a linear equation in $\kappa_1$. Now, $|\mathbf{S}|=0$ implies one of the following two events
\begin{enumerate}
\item $\kappa_1$ is a root of the linear equation.
\item All the coefficients forming the linear equation in $\kappa_1$ are equal to $0$, so that the singularity condition is trivially satisfied for all values of $\kappa_1$.
\end{enumerate}
Since $\kappa_1$ is a random variable drawn from a continuous distribution, the probability of $\kappa_1$ taking a value which is equal to the root of this linear equation is zero. Therefore, the second event happens with probability greater than $0$ and we can write,
\begin{eqnarray*}\Pr\left(|\mathbf{S}| = 0\right) > 0 &\Rightarrow&  \Pr(C_{1(n+2)} + \lambda_1C_{1(n+3)} +\ldots + \lambda_1^{n-1} C_{1(2n+1)}= 0) > 0 \end{eqnarray*}
Consider the equation
\begin{eqnarray*}C_{1(n+2)} + \lambda_1C_{1(n+3)} +\ldots + \lambda_1^{n-1} C_{1(2n+1)} = 0 
\end{eqnarray*}
Since the terms $C_{1j}$ do not depend on $\lambda_1$, the above equation is a polynomial of degree $n$ in $\lambda_1$. Again, as before, there are two possibilities. The first possibility is that $\lambda_1$ takes a value equal to one of the $n$ roots of the above equation. Since $\lambda_1$ is drawn from a continuous distribution, the probability of this event happening is zero. 
The second possibility is that all the coefficients of the above polynomial are zero with non-zero probability and we can write
\begin{eqnarray*}\Pr(C_{1(n+2)} + \ldots + \kappa_1 \lambda_1^{n} C_{1(2n+1)} = 0) > 0 \Rightarrow \Pr(C_{1(2n+1)} = 0) > 0\end{eqnarray*}
We have now shown that if the determinant of the $(2n+1) \times (2n+1)$ matrix ${\bf S}$ is equal to $0$ with non-zero probability, then the determinant of following $2n\times2n$ matrix (obtained by stripping off the first row and last column of ${\bf S})$ is equal to $0$ with non-zero probability.
$$\det
\left[
\begin{array}{ccccccccccccc}
1&\lambda_2 & \lambda_2^2 & \ldots & \lambda_2^{n} & \kappa_2 & \kappa_2 \lambda_2 & \ldots &\kappa_2 \lambda_2^{n-2}\\
\vdots  &  \vdots & \vdots & \ddots & \vdots & \vdots & \vdots &\ddots & \vdots \\ 
1&\lambda_{2n+1} & \lambda_{2n+1}^2 &  \ldots & \lambda_{2n+1}^{n} & \kappa_{2n+1} & \kappa_{2n+1} \lambda_{2n+1} &\ldots& \kappa_{2n+1} \lambda_{2n+1}^{n-2}\end{array} \right] = 0 
$$
with probability greater than $0$.
Repeating the above argument and eliminating the first row and last column at each stage we get
\begin{eqnarray*}
\det\left[
\begin{array}{ccccccccccccc}
1&\lambda_{n+1} & \lambda_{n+1}^2 & \ldots & \lambda_{n+1}^{n} \\
\vdots  &  \vdots & \vdots & \ddots & \vdots & \\
1&\lambda_{2n+1} & \lambda_{2n+1}^2 &  \ldots & \lambda_{2n+1}^{n} \end{array} \right]=0
\end{eqnarray*}
with probability greater than $0$. But this is a Vandermonde matrix and its determinant  
$$\prod_{n+1\leq i < j\leq2n+1} (\lambda_i - \lambda_j)$$
is equal to $0$ only if $\lambda_i = \lambda_j$ for some $i \neq j$. Since $\lambda_i$ are drawn independently from a continuous distribution, they are all distinct almost surely. This implies that $\Pr(|\mathbf{S}| = 0 )  = 0 $.

Thus, the $n+1$ vectors carrying the desired signal at receiver $1$ are linearly independent of the $n$ interference vectors which allows the receiver to zero force interference and obtain $n+1$ interference free dimensions, and therefore $n+1$ degrees of freedom for its message.

At receiver $2$ the desired signal arrives along the $n$ vectors $\barxH^{[22]} \barxV^{[2]}$ while the interference arrives along the $n+1$ vectors $\barxH^{[21]} \barxV^{[1]}$ and  the $n$ vectors $\barxH^{[23]} \barxV^{[3]}$. As enforced by equation (\ref{achint:rx2}) the interference vectors $\barxH^{[23]} \barxV^{[3]}$ are perfectly aligned within the interference vectors $\barxH^{[21]} \barxV^{[1]}$. Therefore, in order to prove that there are $n$ interference free dimensions at receiver $2$ it suffices to show that the columns of the square, $(2n+1)\times (2n+1)$ dimensional matrix
\begin{eqnarray}
\left[\barxH^{[22]} \barxV^{[2]}~~~~\barxH^{[21]} \barxV^{[1]}\right]
\end{eqnarray}
are linearly independent almost surely. This proof is quite similar to the proof presented above for receiver $1$ and is therefore omitted to avoid repetition. Using the same arguments we can show that both receivers $2$ and $3$ are able to zero force the $n+1$ interference vectors and obtain $n$ interference free dimensions for their respective desired signals so that they each achieve $n$ degrees of freedom.

Thus we established the achievability of $d_1+d_2+d_3 = \frac{3n+1}{2n+1}$ for any $n$. This scheme, along with the converse automatically imply that
$$\sup_{(d_1,d_2,d_3) \in \mathcal{D}}{d_1+d_2+d_3} = \frac{3}{2}$$

\subsection{The Degrees of Freedom Region for the $3$ User Interference Channel}
\begin{theorem}\label{theorem:dofr}
The degrees of freedom region of the $3$ user interference channel is characterized as follows:
\begin{eqnarray}
\mathcal{D}=\left\{(d_1,d_2,d_3):\right.&& \nonumber\\
\label{eqn:thmdofrfirst}
d_1+d_2&\leq& 1\nonumber\\
d_2+d_3&\leq&1\nonumber\\
d_1+d_3&\leq& \left.1\right\}
\label{eqn:thmdofrlast}
\end{eqnarray}
\end{theorem}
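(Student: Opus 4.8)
The plan is to prove the two inclusions separately: the converse $\mathcal{D}\subseteq\{(d_1,d_2,d_3):d_1+d_2\le1,\ d_2+d_3\le1,\ d_1+d_3\le1\}$, and the achievability of every point of that polytope.

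For the converse it suffices, by symmetry, to establish $d_1+d_2\le1$. The idea is a genie argument that collapses the $3$-user channel onto a $2$-user interference channel. I would hand the message $W_3$ to both receiver $1$ and receiver $2$ as genie side information; since giving side information can only enlarge the capacity region, the degrees of freedom of the resulting network are an outer bound. Because $W_3$ together with the (a priori known) channel coefficients determines the entire transmitted sequence $\barxX^{[3]}$, receivers $1$ and $2$ can reconstruct and subtract $\barxH^{[13]}\barxX^{[3]}$ and $\barxH^{[23]}\barxX^{[3]}$ respectively, after which they see exactly a $2$-user interference channel between transmitters $1,2$ and receivers $1,2$ (the encoders of transmitters $1$ and $2$ are legitimate $2$-user encoders since $W_1,W_2$ are independent of $W_3$, the power budget of the induced channel is still at most $\rho$, and transmitter $3$'s contribution has been exactly removed). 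The $2$-user interference channel has sum degrees of freedom $1$ \cite{MadsenIT,Jafar_dof_int}, so $d_1+d_2\le1$. Repeating the argument with $W_1$ handed to receivers $2,3$ and with $W_2$ handed to receivers $1,3$ yields $d_2+d_3\le1$ and $d_1+d_3\le1$.

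For achievability I would use that, directly from its definition as an intersection of half-spaces $\{d:\sum_i w_id_i\le f(w)\}$, the region $\mathcal{D}$ is closed and convex; hence it suffices to place every extreme point of the candidate polytope inside $\mathcal{D}$. A short case check over which subsets of the six defining inequalities ($d_i\ge0$ and $d_i+d_j\le1$) can be made simultaneously tight shows that the polytope is the convex hull of the five points $(0,0,0)$, $(1,0,0)$, $(0,1,0)$, $(0,0,1)$ and $(\tfrac12,\tfrac12,\tfrac12)$. The origin is trivial, and each unit vector is achieved exactly by activating a single transmitter--receiver pair over the point-to-point link $Y^{[k]}=H^{[kk]}X^{[k]}+Z^{[k]}$, which has one degree of freedom. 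The remaining vertex $(\tfrac12,\tfrac12,\tfrac12)$ comes from Theorem \ref{theorem:dofach}: its achievability proof puts $\left(\tfrac{n+1}{2n+1},\tfrac{n}{2n+1},\tfrac{n}{2n+1}\right)\in\mathcal{D}$ for every $n$, and since that construction only uses genericity of the channel coefficients (which is preserved under relabeling of the users) the other two cyclic permutations of this point also lie in $\mathcal{D}$. Convexity of $\mathcal{D}$ then gives the centroid $\left(\tfrac{3n+1}{6n+3},\tfrac{3n+1}{6n+3},\tfrac{3n+1}{6n+3}\right)\in\mathcal{D}$, and closedness of $\mathcal{D}$ together with $\tfrac{3n+1}{6n+3}\to\tfrac12$ yields $(\tfrac12,\tfrac12,\tfrac12)\in\mathcal{D}$. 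All five vertices lying in the convex set $\mathcal{D}$, the whole polytope is contained in $\mathcal{D}$, and with the converse this proves the theorem.

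The step I expect to be the crux is the converse reduction: one must argue carefully that after the genie supplies the third message the two remaining receivers face a bona fide $2$-user interference channel, so that the known single-degree-of-freedom result applies verbatim (the genie only helps, the removed transmitter's signal is exactly reconstructible, and the induced power constraint is unchanged). The achievability direction is comparatively routine given Theorem \ref{theorem:dofach} and the fact that $\mathcal{D}$ is closed and convex by construction; the only mild care there is identifying the vertex set and handling the symmetric corner $(\tfrac12,\tfrac12,\tfrac12)$, which is not attained by any single finite-length alignment scheme but only in the closure.
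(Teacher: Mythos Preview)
Your proposal is correct and follows essentially the same route as the paper: the converse via pairwise $2$-user outerbounds (the paper simply cites \cite{Nosratinia-Madsen}; you spell out the genie reduction), and achievability by placing the five vertices $(0,0,0),(1,0,0),(0,1,0),(0,0,1),(\tfrac12,\tfrac12,\tfrac12)$ of the polytope inside the closed convex region $\mathcal{D}$. For the vertex $(\tfrac12,\tfrac12,\tfrac12)$ the paper argues slightly differently---it is the unique point of sum $3/2$ consistent with the pairwise converse, hence must be the maximizer guaranteed by Theorem~\ref{theorem:dofach}---but your limit argument is equally valid (and your symmetrization/averaging detour is unnecessary, since $\bigl(\tfrac{n+1}{2n+1},\tfrac{n}{2n+1},\tfrac{n}{2n+1}\bigr)\to\bigl(\tfrac12,\tfrac12,\tfrac12\bigr)$ directly).
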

\begin{figure}
\centerline{\input{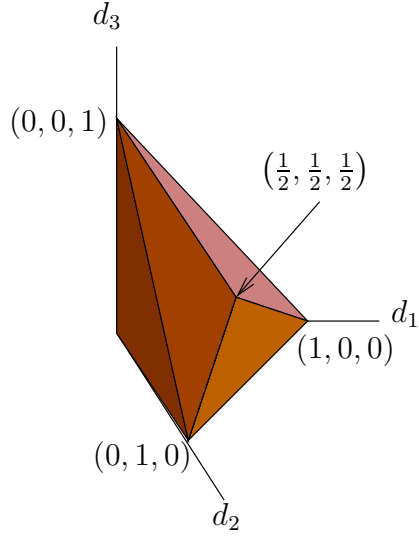}}

\caption{Degrees of Freedom Region for the $3$ user interference channel}\label{fig:dofrint3}
\end{figure}

\begin{proof}
The converse argument is identical to the converse argument for Theorem \ref{theorem:dofach} and is therefore omitted.  We show achievability as follows. Let $\mathcal{D}^{'}$ be the degrees of freedom region of the $3$ user interference channel. We need to prove that $\mathcal{D}^{'} = \mathcal{D}$. We show that $\mathcal{D} \subset \mathcal{D}^{'}$ which along with the converse proves the stated result.

The points $K=(0,0,1)$, $L=(0,1,0)$, $J=(1,0,0)$ can be verified to lie in $\mathcal{D}^{'}$ through trivial achievable schemes. Also, Theorem \ref{theorem:dofach} implies that $N=(\frac{1}{2},\frac{1}{2},\frac{1}{2})$ lies in $\mathcal{D}^{'}$ (Note that this is the only point which achieves a total of $\frac{3}{2}$ degrees of freedom and satisfies the inequalities in (\ref{eqn:thmdofrlast}).
Consider any point $(d_1,d_2,d_3) \in \mathcal{D}$ as defined by the statement of the theorem. The point $(d_1,d_2,d_3)$ can then be shown to lie in a convex region whose corner points are $(0,0,0)$, J, K, L and N. i.e  $(d_1,d_2,d_3)$ can be expressed as a convex combination of the end points (see Fig. \ref{fig:dofrint3}).

$$(d_1,d_2,d_3) = \alpha_1(1,0,0)+\alpha_2(0,1,0)+\alpha_3(0,0,1) + \alpha_4(\frac{1}{2},\frac{1}{2},\frac{1}{2}) + \alpha_5(0,0,0)$$
where the constants $\alpha_i$ are defined as follows.
\begin{eqnarray*}
\begin{array}{|c|c|c|c|c|c|}\hline
 & \alpha_1 & \alpha_2 & \alpha_3 & \alpha_4& \alpha_5\\
\hline 
d_1+d_2+d_3\leq 1 & d_1 & d_2 & d_3 & 0 & 1-d_1-d_2-d_3\\ \hline
d_1+d_2+d_3 > 1 & \frac{d_1-d_2-d_3+1}{2} & \frac{d_2-d_1-d_3+1}{2} & \frac{d_3-d_1-d_2+1}{2} & d_1+d_2+d_3-1 & 0\\ \hline
\end{array}
\end{eqnarray*}
It is easily verified that the values of $\alpha_i$ are non-negative for all $(d_1,d_2,d_3)\in\mathcal{D}$ and that they add up to one. Thus, all points in $\mathcal{D}$ are convex combinations of achievable points $J, K, L, N$ and $(0,0,0)$. Since convex combinations are achievable by time sharing between the end points, this implies that  $\mathcal{D} \subset \mathcal{D}^{'}$. Together with the converse, we have $ \mathcal{D} = \mathcal{D}^{'}$ and the proof is complete.
\end{proof}

Note that the proof presented above uses coding over multiple frequency slots where the channel coefficients take distinct values. We now examine the possible ramifications of this assumption both from a theoretical as well as a practical perspective. 

From a theoretical perspective the assumption of frequency selective channels is intriguing because it is not clear if $K/2$ degrees of freedom will be achieved with constant channels over only one frequency slot. Therefore the validity of the conjecture in \cite{Nosratinia-Madsen} that the interference channel with constant channel coefficients has only $1$ degree of freedom for any number of users still remains undetermined. The issue is analogous to the $2$ user  $X$ channel with a single antenna at all nodes. It is shown in \cite{Jafar_Shamai} that the time/frequency varying MIMO $X$ channel has $4/3$ degrees of freedom per time/frequency dimension. However, it is not known whether the $X$ channel with single antenna nodes and constant channel coefficients can achieve more than $1$ degree of freedom.

From a practical perspective, we present several observations. 
\begin{enumerate}
\item The assumption that the channel coefficients vary over frequency is not restrictive as it holds true in practice for almost all wireless channels. Moreover, note that it is not necessary that the channel coefficients are independent across frequency slots. It suffices that they are chosen according to a continuous joint distribution.
\item We have shown that by coding over $2n+1$ frequency slots, we can achieve $\frac{3n+1}{2n+1}$ degrees of freedom on the $3$ user interference channel. The fact that only a finite number of frequency slots suffice to achieve a certain number of degrees of freedom may be significant in practice. On the other hand if non-causal channel knowledge is not an issue and the channel is time varying then only one frequency slot suffices for this achievability proof.

\item Recall that for the $2$ user $X$ channel, time and frequency variations are not needed when more than $1$ antenna is present at each node. Similarly, we will show in Section \ref{section:MIMOint} that with $M>1$ antennas at each node the $3$ user interference channel with constant channel matrices has $3M/2$ degrees of freedom. 
\end{enumerate}
Before considering the MIMO case with constant channel matrices we visit the issue of $\mathcal{O}(1)$ capacity. 
\section{The $\mathcal{O}(1)$ Capacity of Wireless Networks}
Consider a multiuser wireless channel with transmit power $\rho$, noise power normalized to unity, and sum capacity $C(\rho)$. The degrees of freedom $d$ provide a capacity approximation that is accurate within $o(\log(\rho))$, i.e.,
\begin{eqnarray}
C(\rho)=d\log(\rho)+o(\log(\rho))
\end{eqnarray}
where the little "o" notation is defined as follows:
\begin{eqnarray}
f(x)=o(g(x))\Leftrightarrow \lim_{x\rightarrow \infty}\frac{f(x)}{g(x)}=0.
\end{eqnarray}




Similarly, one can define a capacity characterization $\overline{C}_1(\rho)$, that is accurate to within an $\mathcal{O}(1)$ term,
\begin{eqnarray}
\limsup_{\rho\rightarrow\infty}\left|C(\rho)-\overline{C}_1(\rho)\right|<\infty.
\end{eqnarray}
so that we can write
\begin{eqnarray}
C(\rho)=\overline{C}_1(\rho)+\mathcal{O}(1)
\end{eqnarray}


While the $\mathcal{O}(1)$ notation implies an asymptotic approximation as $\rho\rightarrow\infty$, it is easy to see that for all communication networks, if the $\mathcal{O}(1)$ capacity characterization is known, then one can find a capacity characterization that is within a constant of the capacity for \emph{all} $\rho$.  This is because the capacity $C(\rho)$ is a non-negative, monotonically increasing function of the transmit power $\rho$.  This is seen as follows. Let $\overline{C_1}(\rho)$ be an $\mathcal{O}(1)$ capacity characterization. Mathematically, $\exists \rho_o, C_o<\infty$, such that 
\begin{eqnarray}
\sup_{\rho\geq\rho_o}\left|C(\rho)-\overline{C}_1(\rho)\right|<C_o.
\end{eqnarray}
Then we can construct a capacity characterization $\overline{C}(\rho)$ that is accurate to within a constant for all $\rho$ as follows:
$$ \overline{C}(\rho) \define \left\{ \begin{array}{ll} \overline{C_1}(\rho_o) & \forall \rho \leq \rho_o \\ \overline{C_1}(\rho) &\forall \rho > \rho_o\end{array} \right\}$$
such that the absolute value of the difference between the capacity $C(\rho)$ and $\overline{C}(\rho)$ is bounded above by $\max\{C_o, \overline{C}_1(\rho)\}$.

Clearly, the $\mathcal{O}(1)$ capacity provides in general a more accurate capacity characterization than the degrees of freedom definition. However, it turns out that in most cases the two are directly related. For example, it is well known that for the full rank MIMO channel  with $M$ input antennas and $N$ output antennas, transmit power $\rho$ and i.i.d. zero mean unit variance additive white Gaussian noise (AWGN) at each receiver, the capacity $C(\rho)$  may be expressed as:
\begin{eqnarray}
C(\rho)=\min(M,N)\log(1+\rho)+\mathcal{O}(1)=d\log(1+\rho)+\mathcal{O}(1).
\end{eqnarray}

As formalized by the following theorem, a similar relationship between the degrees of freedom and the $\mathcal{O}(1)$ capacity characterization also holds for most multiuser communication channels. 
\begin{theorem}
For the MIMO multiple access channel, the MIMO broadcast channel, the two user MIMO interference channel and the 2 user MIMO $X$ channel, an $\mathcal{O}(1)$ characterization of the sum capacity can be obtained in terms of the total number of degrees of freedom as follows:
\begin{eqnarray}
C(\rho)=d\log(1+\rho)+\mathcal{O}(1).
\end{eqnarray}
\end{theorem}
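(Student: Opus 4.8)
The plan is to prove the relationship $C(\rho) = d \log(1+\rho) + \mathcal{O}(1)$ separately for each of the four channel classes by sandwiching the sum capacity between a matching lower and upper bound, both of which differ from $d\log(1+\rho)$ by a bounded constant. The upper bound will come from a genie-aided / cooperation argument: for each channel, allowing full cooperation among all transmitters and among all receivers converts the network into a point-to-point MIMO channel whose capacity is exactly $d\log(1+\rho) + \mathcal{O}(1)$ by the classical MIMO result quoted in the excerpt, and cooperation can only enlarge capacity. One must check that the number of antennas of the resulting MIMO channel gives precisely the known degrees of freedom $d$ for each of the four cases (for the MAC with antennas $M_1,M_2,N$ the cooperative channel is $(M_1+M_2,N)$; for the BC, $(M,N_1+N_2)$; for the interference channel, $(M_1+M_2, N_1+N_2)$, which gives an upper bound of $\min(M_1+M_2, N_1+N_2)\log(1+\rho)+\mathcal{O}(1)$ — note this is not always tight, so a sharper genie argument as in \cite{Jafar_dof_int} is needed for the IC and $X$ channel to match the true $d$).

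For the lower bound I would exhibit an explicit achievable scheme whose sum rate is $d\log(1+\rho) - \mathcal{O}(1)$ for every $\rho$, not merely in the degrees-of-freedom limit. For the MAC this is standard: successive decoding achieves the corner points of the capacity region exactly, and the sum-rate point is $\log\det(\mathbf{I} + \rho \mathbf{H}\mathbf{H}^\dagger/\text{(something)})$-type expression which expands as $d\log(1+\rho)+\mathcal{O}(1)$. For the BC, dirty-paper coding achieves the full region and the sum capacity again has a $\log\det$ form with the right leading behavior. For the two-user MIMO interference channel and the two-user MIMO $X$ channel, I would invoke the achievability constructions (zero-forcing plus, where needed, interference alignment in the spatial/extended domain) from \cite{Jafar_dof_int, Jafar_Shamai} and argue that each interference-free stream carries $\log(1+\rho) - \mathcal{O}(1)$ bits, since after zero-forcing the effective point-to-point sub-channels have bounded-below singular values (the channel coefficients are bounded away from $0$ and $\infty$, and the beamforming/zero-forcing matrices are generic hence well-conditioned almost surely). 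Summing over the $d$ streams gives the lower bound.

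The key structural point making both bounds tight to within $\mathcal{O}(1)$ is that in all four cases the degrees-of-freedom-optimal scheme is a \emph{spatial} (zero-forcing / linear) scheme that does not require the extension length to grow with $\rho$ — contrast this with the $K$-user interference channel, where the alignment scheme needs $2n+1$ frequency slots and $n\to\infty$ to approach $K/2$, so that each stream's rate is only $\frac{n+1}{2n+1}\log(1+\rho)+\mathcal{O}(1)$ and the $\mathcal{O}(1)$ slack is unbounded as one optimizes over $n$. This is exactly why the theorem is restricted to these four channels and why the excerpt flags that the same relationship is unlikely for the general $K$-user IC.

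The main obstacle I anticipate is the interference channel and $X$ channel lower bounds: one must verify that the interference-alignment / zero-forcing precoders, whose existence is argued generically (almost surely, via non-vanishing determinants as in the $K=3$ proof above), also yield effective channel matrices whose smallest singular value is bounded below by a constant independent of $\rho$ — otherwise the per-stream rate would be $\log(1+\rho) - \omega(1)$ and the bound would fail. Since the precoders and channel matrices are fixed once and for all (independent of $\rho$) and are non-singular almost surely, their conditioning is a fixed constant, so this goes through; but spelling this out cleanly — especially for the $X$ channel where the achievable region in \cite{Jafar_Shamai} is described as a union over schemes — is the delicate part of the argument. The converse side is comparatively routine given the MIMO result and the cooperation/genie bounds already present in the literature.
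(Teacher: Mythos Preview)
Your proposal is correct and follows essentially the same sandwich strategy as the paper: cooperation/genie (specifically Carleial-type MAC) outerbounds above, linear zero-forcing achievability below, both within $\mathcal{O}(1)$ of $d\log(1+\rho)$ because the precoders are fixed and $\rho$-independent. The only cosmetic difference is that for the MAC and BC inner bounds you invoke the exact capacity-achieving schemes (successive decoding, dirty-paper coding) whereas the paper simply uses zero forcing uniformly across all four channels; both choices yield $d\log(1+\rho)+\mathcal{O}(1)$, and your anticipation of the conditioning issue for the IC/$X$ precoders is a detail the paper leaves implicit.
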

\begin{proof}
Since the proof is quite simple, we only present a brief outline as follows. For the MIMO MAC and BC, the outerbound on sum capacity obtained from full cooperation among the distributed nodes is $d\log(1+\rho)+\mathcal{O}(1)$. The innerbound obtained from zero forcing is also $d\log(1+\rho)+\mathcal{O}(1)$ so that we can write $C(\rho)=d\log(1+\rho)+\mathcal{O}(1)$. For the two user MIMO interference channel and the 2 user MIMO $X$ channel the outerbound is obtained following an extension of Carlieal's outerbound which results in a MIMO MAC channel. The innerbound is obtained from zero forcing. Since both of these bounds are within $\mathcal{O}(1)$ of $d\log(1+\rho)$ we can similarly write $C(\rho)=d\log(1+\rho)+\mathcal{O}(1)$. 
\end{proof}

Finally, consider the  $K$ user interference channel with single antennas at each node. In this case we have only shown:
\begin{eqnarray}
(K/2-\epsilon)\log(1+\rho)+\mathcal{O}(1)\leq C(\rho) \leq (K/2)\log(1+\rho)+\mathcal{O}(1), ~\forall \epsilon>0.
\end{eqnarray}
Consider a hypothetical capacity function $C(\rho)=K/2\log(1+\rho)-c\sqrt{\log(1+\rho^2)}$. Such a capacity function would also satisfy the inner and outerbounds provided above for the $K$ user interference channel and has $D=K/2$ degrees of freedom. However, this hypothetical capacity function does not have a $\mathcal{O}(1)$ capacity characterization equal to $\overline{C}(\rho)=K/2\log(1+\rho)$ as the difference between $C(\rho)$ and $\overline{C}(\rho)$ is unbounded. To claim that  the $\mathcal{O}(1)$ capacity of the $3$ user interference channel is $(3/2)\log(1+\rho)$ we need to show an innerbound of $(K/2)\log(1+\rho)+\mathcal{O}(1)$. Since our achievable schemes are based on interference alignment and zero forcing, the natural question to ask is whether an interference alignment and zero forcing based scheme can achieve exactly $K/2$ degrees of freedom. The following explanation uses the $K=3$ case to suggest that the answer is negative.

Consider an achievable scheme that uses a $M$ symbol extension of the channel. Now, consider a point $(\alpha_1,\alpha_2,\alpha_3)$ that can be achieved over this extended channel using interference alignment and zero-forcing alone. 
If possible, let the total degrees of freedom over this extended channel be $3M/2$. i.e. $\alpha_1+\alpha_2+\alpha_3 = 3M/2$. It can be argued along the same lines as the converse part of Theorem \ref{theorem:dofach} that $(\alpha_i,\alpha_j)$ is achievable in the $2$ user interference channel for $\forall (i,j) \in \{ (1,2), (2,3), (3,1)\}$.
Therefore
$$ \alpha_1 + \alpha_2 \leq M$$
$$ \alpha_2 + \alpha_3 \leq M$$
$$ \alpha_1 + \alpha_3 \leq M$$

It can be easily seen that the only point $(\alpha_1,\alpha_2,\alpha_3)$ that satisfies the above inequalities and achieves a total of $3M/2$ degrees of freedom is $(\frac{M}{2},\frac{M}{2},\frac{M}{2})$. Therefore, any scheme that achieves a total of $3M/2$ degrees of freedom over the extended channel achieves the point $(\frac{M}{2},\frac{M}{2},\frac{M}{2})$.

We assume that the messages $W_i$ are encoded along $M/2$ independent streams similar to the coding scheme in the proof of Theorem \ref{theorem:dofach} i.e.
$$\barxX^{[i]} = \displaystyle\sum_{m=1}^{M/2}{x}^{[i]}_m \mathbf{v}_m^{[i]} = \barxV^{[i]} \xX^{[i]}$$
Now, at receiver 1, to decode an $M/2$ dimensional signal using zero-forcing, the dimension of the interference has to be at most $M/2$.
i.e.,
\begin{equation}\label{prop1:intalign}\mbox{rank}[\barxH^{[13]} \barxV^{[3]} ~~~ \barxH^{[12]} \barxV^{[2]}] = M/2\end{equation}
Note that since $\barxV^{[2]}$ has $M/2$ linearly independent column vectors and $\barxH^{[12]}$ is full rank with probability 1, $\mbox{rank}(\barxH^{[12]} \barxV^{[2]}) = M/2$. Similarly the dimension of the interference from transmitter $3$ is also equal to $M/2$. Therefore, the two vector spaces on the left hand side of equation (\ref{prop1:intalign}) must have full intersection, i.e 
\begin{eqnarray}
 \label{achint:3by2first}
 \spa(\barxH^{[13]} \barxV^{[3]}) &=& \spa(\barxH^{[12]} \barxV^{[2]}) \\
 \spa(\barxH^{[23]} \barxV^{[3]}) &=& \spa(\barxH^{[21]} \barxV^{[1]}) \mbox{   (At receiver 2) }\\
 \spa(\barxH^{[32]} \barxV^{[2]}) &=& \spa(\barxH^{[31]} \barxV^{[1]}) \mbox {  (At receiver 3) } 
 \label{achint:3by2last}
\end{eqnarray}
where $\spa(\mathbf{A})$ represents the space spanned by the column vectors of matrix $\mathbf{A}$
The above equations imply that 
$$ \spa(\barxH^{[13]} (\barxH^{[23]})^{-1} \barxH^{[21]} \barxV^{[1]}) = \spa(\barxH^{[12]} (\barxH^{[32]})^{-1} \barxH^{[31]} \barxV^{[1]} )$$
$$ \Rightarrow \spa( \barxV^{[1]}) = \spa(\mathbf{T} \barxV^{[1]})$$
where $\mathbf{T} = (\barxH^{[13]})^{-1} \barxH^{[23]} (\barxH^{[21]})^{-1} \barxH^{[12]} (\barxH^{[32]})^{-1}  \barxH^{[31]}$. The above equation implies that there exists at least one eigenvector $\xe$ of $\mathbf{T}$ in $\spa(\barxV^{[1]})$. Note that since all channel matrices are diagonal, the set of eigenvectors of all channel matrices, their inverses and their products are all identical to the set of column vectors of the identity matrix. i.e vectors of the form $[0\mbox{ }0\mbox{ }\ldots\mbox{ }1\mbox{ }\ldots\mbox{ }0]^{T}$.  Therefore $\xe$ is an eigenvector for all channel matrices. Since $\xe$ lies in $\spa(\barxV^{[1]})$, equations (\ref{achint:3by2first})-(\ref{achint:3by2last}) imply that
\begin{eqnarray*}
\xe &\in& \spa(\barxH^{[ij]} \barxV^{[i]}), \forall i,j \in \{1,2,3\}\\
\Rightarrow \xe &\in& \spa(\barxH^{[11]} \barxV^{[1]}) \cap \spa(\barxH^{[12]} \barxV^{[2]})
\end{eqnarray*}
Therefore, at receiver 1, the desired signal $\barxH^{[11]} \barxV^{[1]}$ is \emph{not} linearly independent with the interference $\barxH^{[21]} \barxV^{[2]}$. Therefore, receiver 1 cannot decode $W_1$ completely by merely zero-forcing the interference signal. Evidently, interference alignment in the manner described above cannot achieve exactly $3/2$ degrees of freedom on the $3$ user interference channel with a single antenna at all nodes.

Thus, the degrees of freedom for the $3$ user interference channel with $M=1$ do not automatically lead us to the $\mathcal{O}(1)$ capacity. The possibility that the sum capacity of the $3$ user interference channel with single antennas at all nodes may not be of the form $(3/2)\log(1+\rho) +\mathcal{O}(1)$ is interesting because it suggests that the $3$ user interference channel capacity may not be a straighforward extension of the $2$ user interference channel capacity characterizations.

We explore this interesting aspect of the $3$ user interference channel further in the context of multiple antenna nodes. Our goal is to find out if exactly $3M/2$ degrees of freedom may be achieved with $M$ antennas at each node. As shown by the following theorem, indeed we can achieve exactly $3M/2$ degrees of freedom so that the $\mathcal{O}(1)$ capacity characterization for $M>1$ is indeed related to the degrees of freedom as $\overline{C}(\rho)=(3M/2)\log(1+\rho)$.

\section{Degrees of freedom of the $3$ user interference channel with $M>1$  antennas at each node and constant channel coefficients}\label{section:MIMOint}
The $3$ user MIMO  interference channel is interesting for two reasons. First we wish to show that with multiple antennas we can achieve $3M/2$ degrees of freedom with \emph{constant} channel matrices, i.e., multiple frequency slots are not required. Second, we wish to show that exactly $3M/2$ degrees of freedom are achieved by zero forcing and interference alignment which gives us a lowerbound on sum capacity of $3M/2\log(1+\rho)+\mathcal{O}(1)$. Since the outerbound on sum capacity is also $3M/2\log(1+\rho)+\mathcal{O}(1)$ we have an $\mathcal{O}(1)$ approximation to the capacity of the $3$ user MIMO interference channel with $M>1$ antennas at all nodes.

\begin{theorem}\label{theorem:MIMOint}
In a $3$ user interference channel with $M>1$ antennas at each transmitter and each receiver and constant coefficients, the sum capacity $C(\rho)$ may be characterized as:
\begin{eqnarray}
C(\rho)=(3M/2)\log(1+\rho)+\mathcal{O}(1)
\end{eqnarray}
\end{theorem}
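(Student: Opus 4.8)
The plan is to prove the stated $\mathcal{O}(1)$ characterization by establishing matching inner and outer bounds, both equal to $(3M/2)\log(1+\rho)+\mathcal{O}(1)$. The outer bound follows by pairing users: for any pair $(i,j)$, providing the third user's message to receivers $i$ and $j$ as genie side information lets them reconstruct and cancel the third transmitter's interference, so $R_i+R_j$ on the $3$ user channel cannot exceed the sum capacity of the $2$ user MIMO interference channel with $M$ antennas per node, which by the earlier $\mathcal{O}(1)$-capacity theorem (with $d=M$) equals $M\log(1+\rho)+\mathcal{O}(1)$. Adding the three such inequalities, one for each pair, and dividing by $2$ gives $R_1+R_2+R_3\leq (3M/2)\log(1+\rho)+\mathcal{O}(1)$.

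For the inner bound I would show that interference alignment plus zero forcing achieves $M/2$ interference-free dimensions per user. Assume first that $M$ is even. Let transmitter $k$ send $M/2$ independent Gaussian streams $\xX^{[k]}$ along the columns of an $M\times(M/2)$ precoding matrix $\xV^{[k]}$, with power $\rho$ split equally among all streams. Receiver $1$ can confine the interference from transmitters $2$ and $3$ to $M/2$ dimensions exactly when $\spa(\xH^{[12]}\xV^{[2]})=\spa(\xH^{[13]}\xV^{[3]})$; writing the analogous conditions at receivers $2$ and $3$ and eliminating $\xV^{[2]},\xV^{[3]}$ as in the derivation of~(\ref{achint:3by2first})--(\ref{achint:3by2last}), these reduce to the single requirement $\spa(\xV^{[1]})=\spa(\mathbf{T}\xV^{[1]})$ with $\mathbf{T}=(\xH^{[13]})^{-1}\xH^{[23]}(\xH^{[21]})^{-1}\xH^{[12]}(\xH^{[32]})^{-1}\xH^{[31]}$. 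Since $\mathbf{T}$ is a product of generic invertible matrices, almost surely it has $M$ distinct (hence nonzero) eigenvalues and $M$ independent eigenvectors; I would take the columns of $\xV^{[1]}$ to be any $M/2$ of these eigenvectors, and then set $\xV^{[3]}=(\xH^{[23]})^{-1}\xH^{[21]}\xV^{[1]}$ and $\xV^{[2]}=(\xH^{[32]})^{-1}\xH^{[31]}\xV^{[1]}$. This satisfies all three alignment conditions and keeps every $\xV^{[k]}$ of full column rank $M/2$.

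It then remains to show that at each receiver $i$ the desired signal $\xH^{[ii]}\xV^{[i]}$ is linearly independent of the aligned interference, i.e. that the $M\times M$ matrix formed by appending the interference block to $\xH^{[ii]}\xV^{[i]}$ is nonsingular almost surely. This is the main obstacle, since the precoders are algebraic functions of the channel and cannot be treated as independent of it. The resolution is the observation that $\mathbf{T}$, and hence all the $\xV^{[k]}$ and all interference subspaces, involve only the six cross channels and \emph{not} the direct channel $\xH^{[ii]}$. Conditioning on every channel matrix except $\xH^{[ii]}$ therefore freezes the precoders and the interference subspace while leaving $\xH^{[ii]}$ generic; the relevant determinant is then a polynomial in the entries of $\xH^{[ii]}$ that is not identically zero --- one can exhibit an invertible $\xH^{[ii]}$ mapping $\spa(\xV^{[i]})$ onto a complement of the interference subspace --- and hence it is nonzero almost surely. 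Intersecting the three almost-sure events produces a channel realization on which every receiver zero-forces its interference and recovers $M/2$ streams through a fixed, full-rank $(M/2)\times(M/2)$ effective channel, so $R_i=(M/2)\log(1+\rho)+\mathcal{O}(1)$ and $\sum_i R_i\geq(3M/2)\log(1+\rho)+\mathcal{O}(1)$, matching the outer bound.

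Finally, for $M$ odd the same construction is carried out over a $2$ symbol extension: replace each $\xH^{[ij]}$ by $\mathbf{I}_2\otimes\xH^{[ij]}$ and let each user send $M$ streams into the $2M$-dimensional space. The effective matrix becomes $\mathbf{I}_2\otimes\mathbf{T}$, so each eigenvalue of $\mathbf{T}$ has a two-dimensional eigenspace and $M$ independent eigenvectors can still be selected for the user-$1$ precoder; the decodability argument (direct channels absent from $\mathbf{I}_2\otimes\mathbf{T}$, conditional genericity) goes through verbatim, yielding $M$ degrees of freedom per user over $2$ symbols, i.e. $M/2$ each per symbol and $3M/2$ in total, again with a bounded gap. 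This is precisely where $M>1$ is used: generic full $M\times M$ channel matrices need not be diagonal, so the common-eigenvector obstruction that prevents exactly $3/2$ degrees of freedom in the $M=1$ frequency-extension setting (the discussion preceding this section) does not arise.
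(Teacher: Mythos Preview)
Your outer bound and the even-$M$ achievability are essentially the paper's argument, and your conditional-genericity framing of the linear-independence step (freeze every channel except $\xH^{[ii]}$, then observe the determinant is a nonzero polynomial in its entries) is a clean way to say what the paper expresses more informally.

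The gap is in the odd-$M$ case. Over the two-symbol extension the direct channel is $\mathbf{I}_2\otimes\xH^{[ii]}$, which is \emph{block diagonal}, not a generic $2M\times 2M$ matrix. Consequently your decodability argument does \emph{not} go through verbatim: whether the determinant is the zero polynomial in the entries of $\xH^{[ii]}$ now depends on which $M$ eigenvectors of $\mathbf{I}_2\otimes\mathbf{T}$ you picked. For a concrete failure, take $\barxV^{[1]}=\bigl[\begin{smallmatrix}\xe_1&\cdots&\xe_M\\0&\cdots&0\end{smallmatrix}\bigr]$; then $\barxV^{[2]},\barxV^{[3]}$ and all interference columns also live entirely in the first $M$ coordinates, as does $(\mathbf{I}_2\otimes\xH^{[11]})\barxV^{[1]}$, so the $2M\times 2M$ matrix $[\,(\mathbf{I}_2\otimes\xH^{[11]})\barxV^{[1]}\ \ (\mathbf{I}_2\otimes\xH^{[12]})\barxV^{[2]}\,]$ has rank at most $M$ for \emph{every} $\xH^{[11]}$. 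Thus ``one can exhibit an invertible $\xH^{[ii]}$ mapping $\spa(\barxV^{[i]})$ onto a complement of the interference'' is false in general here, because $\mathbf{I}_2\otimes\xH^{[ii]}$ preserves the block decomposition.

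What is missing is a specific choice of the $M$ eigenvectors that spreads them across the two time slots, together with an argument tailored to the block structure. The paper does exactly this: it takes
\[
\barxV^{[1]}=\left[\begin{array}{cccccc}\xe_1&0&\xe_3&\cdots&0&\xe_M\\0&\xe_2&0&\cdots&\xe_{M-1}&\xe_M\end{array}\right],
\]
and then proves linear independence by a case analysis on the coefficients of the last two columns, reducing to the even-$M$ style argument on each $M$-dimensional block. Your proposal needs this (or an equivalent) refinement; ``goes through verbatim'' is not correct.
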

The proof is presented in Appendices \ref{app:proofeven} and \ref{app:proofodd}.

\section{Cognitive Message Sharing on the $3$ user Interference Channel}
Cognitive message sharing refers to a form of cooperation between transmitters and/or receivers where the message of one user is made available non-causally to the transmitter or receiver of another user. Degrees of freedom with cognitive cooperation are considered in \cite{Devroye_Sharif} and \cite{arxiv_dofx2}. It is shown in \cite{Jafar_Shamai} that for the two user interference channel with equal number $M$ of antennas at all nodes there is no gain in degrees of freedom when one user has a cognitive transmitter, a cognitive receiver or both. In all these cases the total degrees of freedom equals $M$.  However the full $2M$ degrees of freedom are obtained if both users have a cognitive transmitter, or both users have a cognitive receiver, or one user has a cognitive transmitter and the other user has a cognitive receiver. In this section we generalize this result to the three user interference channel. 

To generalize the result we introduce some notation. Let $\mathcal{T}_i$ be defined as the set of messages available non-causally at transmitter $i$ and receiver $i$. Let us also define $\mathcal{R}_i$ as the set containing the message intended for receiver $i$ and also the messages non-causally available at receiver $i$. With no cognitive sharing of messages $\mathcal{T}_i=\mathcal{R}_i=\{W_i\}$. Further, let us define $\mathcal{W}_i=\mathcal{T}_i\cup\mathcal{R}_i$.
The following theorem presents the total number of degrees of freedom for some interesting cognitive message sharing scenarios.

\begin{theorem}\label{theorem:cogint}
The total number of degrees of freedom $\eta^\star$ for the $3$ user interference channel under various cognitive message sharing scenarios are determined as follows:
\begin{enumerate}
\item If only one message (e.g. $W_1$) is shared among all nodes the degrees of freedom are unchanged.
\begin{eqnarray}
\mathcal{W}_1=\{W_1\}, \mathcal{W}_2=\{W_1, W_2\}, \mathcal{W}_3=\{W_1,W_3\} \Rightarrow \eta^\star = 3/2.
\end{eqnarray}
Note that this includes all scenarios where message $W_1$ is made available to only the transmitter, only the receiver or both transmitter \emph{and} receiver of users $2$ and $3$. In all these cases, there is no benefit in terms of degrees of freedom.
\item If two messages (e.g. $W_1,W_2$) are shared among all nodes then we have $2$ degrees of freedom.
\begin{eqnarray}
\mathcal{W}_1=\{W_1, W_2\}, \mathcal{W}_2=\{W_1, W_2\}, \mathcal{W}_3=\{W_1,W_2, W_3\} \Rightarrow \eta^\star = 2
\end{eqnarray}
Note that the messages $W_1, W_2$ may be shared through cognitive transmitters, receivers or both.
\item If only one receiver (e.g. receiver 3) is fully cognitive then we have $3/2$ degrees of freedom.
\begin{eqnarray}
\mathcal{W}_1=\{W_1\}, \mathcal{W}_2=\{W_2\}, \mathcal{T}_3=\{W_3\}, \mathcal{R}_3=\{W_1,W_2, W_3\}\Rightarrow \eta^\star = 3/2
\end{eqnarray}
\item If only one transmitter (e.g. transmitter 3) is fully cognitive then we have $2$ degrees of freedom.
\begin{eqnarray}
\mathcal{W}_1=\{W_1\}, \mathcal{W}_2=\{W_2\}, \mathcal{T}_3=\{W_1,W_2, W_3\}, \mathcal{R}_3=\{W_3\}\Rightarrow \eta^\star = 2
\end{eqnarray}
\end{enumerate}
\end{theorem}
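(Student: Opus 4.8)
\emph{Proof plan.} The idea is to match, in each case, an achievable point with a converse of the Carleial type. For cases (1) and (3) the lower bound $\eta^\star\ge 3/2$ is immediate: making a message available non‑causally to an extra node can only enlarge the capacity region, and Theorem~\ref{theorem:dofach} already delivers $3/2$ with no cognitive sharing at all. For cases (2) and (4) we exhibit the corner point $(d_1,d_2,d_3)=(1,1,0)$, so $\eta^\star\ge 2$. Every converse will come from a single lemma: for a suitably chosen pair $(i,j)$ one has $d_i+d_j\le 1$, and the answer in each case is governed by how many of these pairwise bounds survive.

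\emph{Achievability of $(1,1,0)$ in cases (2) and (4).} In case (2), if $W_1,W_2$ are shared with the transmitters then $\{$Tx$1$,Tx$2\}$ jointly know $W_1,W_2$ and act as one two‑antenna transmitter: precoding the symbol pair by the inverse of the $2\times2$ direct‑channel matrix $[H^{[ij]}]_{i,j\in\{1,2\}}$ (invertible almost surely) delivers $W_1$ to receiver~$1$ and $W_2$ to receiver~$2$ interference‑free, with Tx$3$ silent. If instead $W_1,W_2$ are shared only with the receivers, then receiver~$1$ (knowing $W_2$) reconstructs and subtracts the contribution $H^{[12]}X^{[2]}$ — a function of $W_2$ alone, since Tx$2$ is not cognitive — receiver~$2$ subtracts $H^{[21]}X^{[1]}$, and again Tx$3$ is silent; mixed cases are easier. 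In case (4) the cognitive Tx$3$ transmits the scalar $X^{[3]}=-\tfrac{H^{[21]}}{H^{[23]}}X^{[1]}-\tfrac{H^{[12]}}{H^{[13]}}X^{[2]}$ while Tx$1$,Tx$2$ send functions of $W_1,W_2$. At receiver~$1$ the $X^{[2]}$‑component of $H^{[13]}X^{[3]}$ cancels the interference $H^{[12]}X^{[2]}$ exactly, while the $X^{[1]}$‑component merely rescales the desired signal; symmetrically at receiver~$2$. The effective direct gains $H^{[11]}-\tfrac{H^{[13]}H^{[21]}}{H^{[23]}}$ and $H^{[22]}-\tfrac{H^{[23]}H^{[12]}}{H^{[13]}}$ are nonzero almost surely, so $d_1=d_2=1$, $d_3=0$; power bookkeeping is trivial since all multiplying gains are bounded away from $0$ and $\infty$. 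This is precisely the intuition that a cognitive transmitter ``lends a transmit antenna'', whereas the cognitive receiver of case (3) can only clean up its own output.

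\emph{The converse lemma and its application.} Fix a pair $(i,j)$, give receiver~$i$ by a genie all messages $\{W_k:k\neq i,j\}$, and suppose $W_j$ enters the input of exactly one transmitter. After decoding its own $W_i$ (reliably, by Fano), receiver~$i$ knows all messages except $W_j$, hence can reconstruct and subtract every transmitted signal except the single term carrying $W_j$; the residual is $W_j$ seen through one nonzero channel coefficient, which — up to a $\rho$‑independent additive gap in mutual information — is at least as informative as receiver~$j$'s observation. Thus receiver~$i$ decodes the pair $(W_i,W_j)$ from its scalar output and the single‑output entropy bound gives $R_i+R_j\le\log\rho+o(\log\rho)$, i.e. $d_i+d_j\le1$ (the two additive gaps to track — conditioning on the genie messages, and trading one bounded‑away‑from‑zero gain for another — are both $o(\log\rho)$). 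Now apply: in case (3) every $W_k$ is emitted by Tx$k$ alone, so $d_1+d_2\le1$, $d_2+d_3\le1$, $d_1+d_3\le1$ all hold and $\eta^\star\le 3/2$; in case (1), $W_2$ is emitted by Tx$2$ alone and $W_3$ by Tx$3$ alone, so piggy‑backing $W_2$ on receiver~$1$, $W_3$ on receiver~$1$, and $W_3$ on receiver~$2$ gives the same three bounds (we never need to piggy‑back $W_1$, which is emitted by all three transmitters), hence $\eta^\star\le 3/2$; in cases (2) and (4), $W_3$ is still emitted by Tx$3$ alone, giving $d_1+d_3\le1$ and $d_2+d_3\le1$, hence $d_1+d_2+2d_3\le 2$ and $\eta^\star\le 2-d_3\le 2$, while $W_1,W_2$ are each emitted by at least two transmitters (all three in case (2); Tx$j$ and Tx$3$ in case (4)), so the lemma supplies no bound $d_1+d_2\le1$. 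Combining with the achievability above yields $\eta^\star=3/2$ in cases (1),(3) and $\eta^\star=2$ in cases (2),(4).

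\emph{Main obstacle.} The delicate step is to locate the dividing line between the two answers: the pairwise converse survives exactly when the piggy‑backed message is emitted by a single antenna, and fails exactly when it is emitted by two or more, because then the receiver sees only one fixed linear combination of those signals, which the code is free to make carry no information about that message — and the case‑(4) scheme literally does this. So the loss of $d_1+d_2\le1$ in cases (2) and (4) is a genuine feature, not a weakness of the proof; the crux is to verify that this is the only pairwise bound lost and that the loosened region still tops out at exactly $2$, matched by $(1,1,0)$. Everything else is the standard genie/Fano bookkeeping, which simultaneously exposes the asymmetry between a cognitive transmitter and a cognitive receiver on the $3$ user channel.
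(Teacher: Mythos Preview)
Your proof is correct and follows essentially the same logic as the paper: zero-forcing/interference-cancellation for achievability and genie-aided single-receiver MAC bounds for the converse. The paper organizes the converse by reducing each case to a known $2$-user cognitive result (setting one message to $\phi$, or merging Tx$1$/Tx$2$ and Rx$1$/Rx$2$ into two-antenna nodes), whereas you package all four converses into a single ``piggy-back'' lemma; in particular, for case~(2) the paper obtains $d_1+d_2+d_3\le 2$ directly from the merged $2\times 2$ MAC, while you reach the same ceiling via the two surviving pairwise bounds $d_1+d_3\le 1$ and $d_2+d_3\le 1$ --- a slightly different path to the same number. Your achievability for case~(4) is an explicit instance of the paper's ``Tx$3$ lends an antenna'' description, and your case~(2) achievability could use one more sentence for the mixed transmitter/receiver sharing sub-case (e.g.\ Tx$1$ cognitive, Rx$2$ cognitive), but that is routine and the paper simply cites the $2$-user result for it.
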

The last two cases are significant as they show the distinction between cognitive transmitters and cognitive receivers that was not visible in the two user interference channel studied in \cite{Jafar_Shamai}. In \cite{Jafar_Shamai} it was shown that from a degree of freedom perspective, cognitive transmitters are equivalent to cognitive receivers for the two user interference channel. However, cases $3$ and $4$ above show that cognitive transmitters may be more powerful than cognitive receivers. Intuitively, a cognitive receiver with no message of its own is useless whereas a cognitive transmitter with no message of its own is still useful.

\begin{proof}
\begin{enumerate}
\item Consider the case where $W_1$ is shared with either the transmitter or receiver (or both) of user $2$ and user $3$. Now, with $W_1=\phi$ we have a two user interference channel with no cognitive message sharing which gives us the outerbound $d_2+d_3\leq 1$. For the next outerbound, set $W_2=\phi$ and let the transmitter of user $2$ cooperate with the transmitter of user $1$ as a two antenna transmitter. We now have a two user cognitive interference channel with user $1$ as the primary user (with two transmit antennas and one receive antenna) and user $3$ with the cognitive transmitter, cognitive receiver or both. Using the standard MAC outerbound argument it is easily seen that by reducing the noise at receiver $1$ we must be able to decode both messages $W_1, W_3$ at receiver 1. This gives us the outerbound $d_1+d_3\leq 1$. Similarly, we obtain the outerbound $d_1+d_2\leq 1$. Adding up the three outerbound we have $\eta^\star\leq 3/2$. Since $3/2$ degrees of freedom are achievable even without any cognitive cooperation, we have $\eta^\star=3/2$. Note that this result is easily extended to $K$ users, i.e. with only one message shared among all nodes the degrees of freedom are not increased.
\item For achievability, set $W_3=\phi$ and let transmitter $3$ stay silent. Then we have a cognitive two user interference channel with two shared messages  which has $2$  degrees of freedom as established in \cite{Jafar_Shamai}.  For the converse argument let transmitter $1$ and $2$ cooperate as a two antenna transmitter $T_{12}$ and receiver $1$ and $2$ cooperate as a two antenna receiver $R_{12}$. Then we have a two user cognitive interference channel where the primary user has two transmit and two receive antennas while the cognitive user has a single transmit antenna and a single receive antenna. Once again, the standard MAC outerbound argument is used to show that by reducing noise at the primary receiver, we must be able to decode all messages at the primary receiver. This gives us the outerbound $d_1+d_2+d_3\leq 2$. Since the inner and outerbounds agree, $\eta^\star=2$.
\item Achievability of $3/2$ degrees of freedom is trivial as no cognitive message sharing is required. For the converse, setting $W_1, W_2, W_3$ to $\phi$ one at a time leads to the two user cognitive interference channel with one shared message for which the degrees of freedom are bounded above by $1$. Adding the three outerbounds we conclude that $\eta^\star=3/2$. This result is also easily extended to the $K$ user interference channel.
\item For achievability, set $W_3=\phi$. Then we have a two user interference channel with a cognitive helper (transmitter $3$) who knows both user's messages. Two degrees of freedom are achieved easily on this channel as transmitters $1$ and $3$ cooperate to zero force the transmission of $W_1$ at receiver $2$, while transmitters $2$ and $3$ cooperate to zero force the transmission of $W_2$ at receiver $1$. By eliminating interference at receivers $1$ and $2$, we have two degrees of freedom. The converse follows directly from the converse for part $2$, so that we have $\eta^\star = 2$.
\end{enumerate}
\end{proof}

\section{Conclusion}

We have shown that with perfect channel knowledge the $K$ user interference channel has $K/2$ spatial degrees of freedom.  Conventional wisdom has so far been consistent with the conjecture that distributed interfering systems cannot have more than $1$ degree of freedom and therefore  the best known outerbound $K/2$ has not been considered significant. This pessimistic outlook has for long invited researchers to try to prove that more than $1$ degree of freedom is not possible while ignoring the $K/2$ outerbound. The present result shifts the focus onto the outerbound by proving that it is tight if perfect and global channel knowledge is available. Thus, the present result could guide future research along an optimistic path in the same manner that MIMO technology has shaped our view of the capacity of a wireless channel. There are several promising directions for future work. From a practical perspective it is important to explore to what extent interference alignment can be accomplished with limited channel knowledge. Simpler achievability schemes are another promising avenue of research. For example, the interference alignment scheme based on different propagation delays that we presented in this paper is an exciting possibility as it only requires a careful placing of interfering nodes to satisfy certain delay constraints. 

\appendices

\section{Achievability for  Theorem \ref{theorem:dofach} for arbitrary $K$ }
\label{app:K} 

Let $N=(K-1)(K-2)-1$. We show that $(d_1(n),d_2(n), \ldots d_K(n)) $ lies in the degrees of freedom region of the $K$ user interference channel for any $n \in \mathbb{N}$ where
\begin{eqnarray*}
d_1(n) &=& \frac{(n+1)^N}{(n+1)^N+n^N} \\
d_i(n) &=& \frac{n^N}{(n+1)^N+n^N},~~~~ i=2,3 \ldots K 
\end{eqnarray*}
This  implies that
$$ \max_{(d_1,d_2,\ldots d_K)\in\mathcal{D}} d_1 + d_2 + \cdots d_K \geq \sup_n \frac{(n+1)^N + (K-1) n^N}{(n+1)^N + n^N} = K/2$$

We provide an  achievable scheme to show that $((n+1)^N,n^N,n^N\ldots n^N)$ lies in the degrees of freedom region of an $M_n=(n+1)^N+n^N$ symbol extension of the original channel which automatically implies the desired result.
In the extended channel, the signal vector at the $k^{th}$ user's receiver can be expressed as 
$$ \barxY^{[k]}(t) = \displaystyle\sum_{j=1}^{K} \barxH^{[kj]} \barxX^{[j]}(t) + \barxZ^{[k]}(t)$$ 
where $\barxX^{[j]}$ is an $M_n \times 1$ column vector representing the $M_n$ symbol extension of the transmitted symbol $X^{[k]}$, i.e 
$$\barxX^{[j]}(t) \define \left[ \begin{array}{c} X^{[j]}(1,t) \\X^{[j]}(2,t)\\ \vdots \\ X^{[j]}(M_n,t)) \end{array}\right]$$ Similarly $\barxY^{[k]}$ and $\barxZ^{[k]}$ represent $M_n$ symbol extensions of the $Y^{[k]}$ and $Z^{[k]}$ respectively. 
$\barxH^{[kj]}$ is a diagonal $M_n\times M_n$ matrix representing the $M_n$ symbol extension of the channel i.e
$$ \barxH^{[kj]} \define \left[ \begin{array}{cccc}  H^{[kj]}(1) & 0 & \ldots & 0\\
	0 & H^{[kj]}(2) & \ldots & 0\\
	\vdots & \cdots & \ddots & \vdots\\ 
	0 & 0& \cdots  & H^{[kj]}(M_n) \end{array}\right] $$
Recall that the diagonal elements of $\barxH^{[kj]}$ are drawn independently from a continuous distribution and are therefore distinct with probability $1$.


In a manner similar to the $K=3$ case, message $W_1$ is encoded at transmitter $1$ into $(n+1)^N$ independent streams $x^{[1]}_m(t), m=1,2,\ldots,(n+1)^N$ along vectors $\mathbf{v}^{[1]}_m$ so that $\barxX^{[1]}(t)$ is
$$\barxX^{[1]}(t) = \displaystyle\sum_{m=1}^{(n+1)^N} x^{[1]}_m(t) \mathbf{v}_m^{[1]} = \barxV^{[1]} \xX^{[1]}(t)$$
where $\xX^{[1]}(t)$ is a $(n+1)^N \times 1$ column vector and $\barxV^{[1]}$ is a $ M_n \times (n+1)^N$ dimensional matrix.
Similarly $W_i, i\neq 1$ is encoded into $n^K$ independent streams by transmitter $i$ as 
$$\barxX^{[i]}(t) = \displaystyle\sum_{m=1}^{n^N}{x}^{[i]}_m(t) \mathbf{v}_m^{[i]} = \barxV^{[i]} \xX^{[i]}(t)$$
The received signal at the $i^{th}$ receiver can then be written as
$$ \barxY^{[i]}(t) = \displaystyle\sum_{j=1}^{K}\barxH^{[ij]} \barxV^{[j]}\xX^{[j]}(t) + \barxZ^{[i]}(t)$$

All receivers decode the desired signal by zero-forcing the interference vectors. At receiver 1, to obtain $(n+1)^N$ interference free dimensions corresponding to the desired signal from an $M_n=(n+1)^N+n^N$ dimensional received signal vector ${\barxY}^{[1]}$, the dimension of the interference should be not more than $n^N$. This can be ensured by perfectly aligning the interference from transmitters $2,3 \ldots K$ as follows
\begin{equation}\label{achintK:rx1} \barxH^{[12]} \barxV^{[2]} = \barxH^{[13]} \barxV^{[3]} = \barxH^{[14]} \barxV^{[4]} = \ldots = \barxH^{[1K]} \barxV^{[K]}\end{equation}

At the same time,  receiver $2$ zero-forces the interference from $\barxX^{[i]}, i \neq 2$. To extract $n^N$ interference-free dimensions from a $M_n=(n+1)^N+n^N$ dimensional vector, the dimension of the interference has to be not more than  $(n+1)^N$.

This can be achieved by choosing $\barxV^{[i]}, i \neq 2 $ so that
\begin{equation}
 \label{achintK:rx2}
 \begin{array}{ccc}
\barxH^{[23]}  \barxV^{[3]}  & \prec & \barxH^{[21]} \barxV^{[1]} \\
\barxH^{[24]}  \barxV^{[4]}  & \prec & \barxH^{[21]} \barxV^{[1]} \\
 & \vdots & \\
\barxH^{[2K]}  \barxV^{[K]}  & \prec & \barxH^{[21]} \barxV^{[1]} \\
	\end{array}
	\end{equation}
Notice that the above relations align the interference from $K-2$ transmitters within the interference from transmitter $1$ at receiver $2$.
Similarly, to decode $W_i$ at receiver $i$ when $i\neq 1$  we wish to choose $\barxV^{[i]}$ so that the following $K-2$ relations are satisfied.
\begin{equation} \label{achintK:rxi} \barxH^{[ij]} \barxV^{[j]} \prec \barxH^{[i1]} \barxV^{[1]}, j \notin \{1,i\}  \end{equation}
We now wish to pick vectors $\barxV^{[i]}, i=1,2 \ldots K $ so that equations (\ref{achintK:rx1}), (\ref{achintK:rx2}) and (\ref{achintK:rxi}) are satisfied.
Since channel matrices $\barxH^{[ij]}$ have a full rank of $M_n$ almost surely,  equations (\ref{achintK:rx1}), (\ref{achintK:rx2}) and (\ref{achintK:rxi}) can be equivalently expressed as
\begin{eqnarray}
 \label{achintK:rxfirst}
 \barxV^{[j]}  =  \mathbf{S}^{[j]} \mathbf{B} & j=2,3,4 \ldots K& \mbox{ At receiver 1}\\
	 \left.
	 \begin{array}{ccc}
 \mathbf{T}^{[2]}_3 \mathbf{B} = \mathbf{B}  & \prec &  \barxV^{[1]} \\
 \mathbf{T}^{[2]}_4 \mathbf{B} & \prec & \barxV^{[1]} \\
 & \vdots & \\
 \mathbf{T}^{[2]}_{K} \mathbf{B} & \prec & \barxV^{[1]} 
      \end{array}
		 \right\} &  \textrm{At receiver 2 } &   \label{achintK:rxmiddle}\\
	 \left.
	 \begin{array}{ccc}
\mathbf{T}^{[i]}_2  \mathbf{B}  & \prec &  \barxV^{[1]} \\
 \mathbf{T}^{[i]}_3 \mathbf{B} & \prec & \barxV^{[1]} \\
 & \vdots & \\
 \mathbf{T}^{[i]}_{i-1} \mathbf{B} & \prec & \barxV^{[1]} \\
 \mathbf{T}^{[i]}_{i+1} \mathbf{B} & \prec & \barxV^{[1]} \\
 & \vdots & \\
 \mathbf{T}^{[i]}_{K} \mathbf{B} & \prec & \barxV^{[1]} 
      \end{array}
		 \right\} &  \textrm{At receiver i where } i=3 \ldots K &  
 \label{achintK:rxlast}
\end{eqnarray}

where 
\begin{eqnarray}
\mathbf{B} = (\barxH^{[21]})^{-1} \barxH^{[23]} \barxV^{[3]}\\
\mathbf{S}^{[j]}  =  (\barxH^{[1j]})^{-1}\barxH^{[13]} (\barxH^{[23]})^{-1} \barxH^{[21]}, & j=2,3, \ldots K \\
\mathbf{T}^{[i]}_{j}  =  (\barxH^{[i1]})^{-1} \barxH^{[ij]} \mathbf{S}^{[j]} & i,j=2,3\ldots K , j\neq i 
 \label{achintK:T_last}
\end{eqnarray}
Note that $\mathbf{T}^{[2]}_3 = \mathbf{I}$, the $M_{n} \times M_{n}$ identity matrix.
We now choose $\barxV^{[1]}$ and $\mathbf{B}$ so that they satisfy the $(K-2)(K-1)=N+1$ relations in (\ref{achintK:rxmiddle})-(\ref{achintK:rxlast}) and then use equations in (\ref{achintK:rxfirst}) to determine $\barxV^{[2]}, \barxV^{[3]} \ldots \barxV^{[K]}$. Thus,  our goal is to find matrices $\barxV^{[1]}$ and $\mathbf{B}$ so that 
$$ \mathbf{T}^{[i]}_j \mathbf{B} \prec \barxV^{[1]}$$ for all $i,j = \{2,3 \ldots K\},i\neq j$.

Let $\mathbf{w}$ be the $M_n\times 1$ column vector 
$$\mathbf{w} = \left[ \begin{array}{c} 1 \\ 1 \\ \vdots \\ 1\end{array} \right]$$

We need to choose $n^{(K-1)(K-2)-1} = n^N$ column vectors for $\mathbf{B}$. The sets of column vectors of $\mathbf{B}$ and $\barxV^{[1]}$ are chosen to be equal to the sets $B$ and $\bar{V}^{[1]}$ where 

$$ B =  \left\{ \bigg(\prod_{m,k \in \{2,3, \ldots K\}, m\neq k, (m,k)\neq(2,3)}\big( \mathbf{T}^{[m]}_k \big)^{\alpha_{mk}} \bigg) \xw: \forall \alpha_{mk} \in \{0,1,2 \ldots n-1 \}\right\}$$
$$ \bar{V}^{[1]} = \left\{ \bigg(\prod_{m,k \in \{2,3, \ldots K\}, m\neq k,(m,k)\neq(2,3)}\big( \mathbf{T}^{[m]}_k \big)^{\alpha_{mk}}\bigg) \xw: \forall \alpha_{mk} \in \{0,1,2 \ldots n\}\right\}$$

For example, if $K=3$ we get $N=1$. $\mathbf{B}$ and $\barxV^{[1]}$ are chosen as 
\begin{eqnarray*} \mathbf{B} &=& \left[ \xw ~~~\mathbf{T}^{[3]}_2 \xw~~~\ldots~~~(\mathbf{T}^{[3]}_2)^{n-1}\xw\right] \\
\barxV^{[1]} &=& \left[ \xw ~~~\mathbf{T}^{[3]}_2 \xw~~~\ldots~~~(\mathbf{T}^{[3]}_2)^{n} \xw \right]
\end{eqnarray*}
To clarify the notation further, consider the case where $K=4$.  Assuming $n=1$, $B$ consists of exactly one element i.e  $B=\{\xw\}$. The set $\bar{V}^{[1]}$ consists of all $2^N=2^5=32$ column vectors of the form\\
$(\mathbf{T}^{[2]}_4)^{\alpha_{24}} (\mathbf{T}^{[3]}_2)^{\alpha_{32}}(\mathbf{T}^{[3]}_4)^ {\alpha_{24}}(\mathbf{T}^{[4]}_3)^{\alpha_{43}} (\mathbf{T}^{[4]}_2)^{\alpha_{42}} \xw$\\
where all $\alpha_{24}, \alpha_{32}, \alpha_{34}, \alpha_{42}, \alpha_{43}$ take values $0,1$. $B$ and $\bar{V}^{[1]}$ can be verified to have $n^N$ and $(n+1)^{N}$ elements respectively.

$\barxV^{[i]}, i=2, 3\ldots K$ are chosen using equations (\ref{achintK:rxfirst}).
Clearly, for $(i,j)=(2,3)$, 
	$$ \mathbf{T}^{[i]}_j \mathbf{B} = \mathbf{B} \prec \barxV^{[1]}$$
Now, for $i \neq j, i,j=2 \ldots K, (i,j) \neq (2,3) $
	\begin{eqnarray*}  
\mathbf{T}^{[i]}_j B &=& \Bigg\{ \bigg(\prod_{m,k \in \{2,3, \ldots N\}, m\neq k, (m,k) \neq (2,3)}\big( \mathbf{T}^{[m]}_k \big)^{\alpha_{mk}} \bigg) \xw: \\
& & \forall (m,k) \neq (i,j), \alpha_{mk} \in \{0,1,2 \ldots n-1\}, \alpha_{ij} \in \{1,2, \ldots n\} \Bigg\}\\
	 \Rightarrow \mathbf{T}^{[i]}_{j} B &\in& \bar{V}^{[1]} \\
	 \Rightarrow \mathbf{T}^{[i]}_{j} \mathbf{B} &\prec& \mathbf{\bar{V}^{[1]}} \\
	 \end{eqnarray*}
Thus, the interference alignment equations (\ref{achintK:rxfirst})-(\ref{achintK:rxlast}) are satisfied.

Through interference alignment, we have now ensured that the dimension of the interference is small enough. We now need to verify that the components of the desired signal are linearly independent of the components of the interference so that the signal stream can be completely decoded by zero-forcing the interference.
Consider the received signal vectors at Receiver $1$. The desired signal arrives along the $(n+1)^N$ vectors $\barxH^{[11]} \barxV^{[1]}$. As enforced by equations (\ref{achintK:rxfirst}), the interference vectors from transmitters $3,4 \ldots K$ are perfectly aligned with the interference from transmitter $2$ and therefore, all interference arrives along the $n^N$ vectors $\barxH^{[12]} \barxV^{[2]}$. In order to prove that there are $(n+1)^N$ interference free dimensions it suffices to show that the columns of the square, $M_n\times M_n$ dimensional matrix
\begin{eqnarray}
\left[\barxH^{[11]} \barxV^{[1]}~~~~\barxH^{[12]} \barxV^{[2]}\right]
\end{eqnarray}
are linearly independent almost surely. Multiplying the above $M_n \times M_n$ matrix with $(\barxH^{[11]})^{-1}$ and substituting for $\barxV^{[1]}$ and $\barxV^{[2]}$, we get a matrix whose $l$th row has entries of the forms 
$$\prod_{(m,k) \in \{2,3  \ldots K\}, m \neq k, (m,k) \neq (2,3)} (\lambda_l^{mk})^{\alpha_{mk}}$$ and $$d_l \prod_{(m,k) \in \{2,3  \ldots K\}, m \neq k, (m,k) \neq (2,3)} (\lambda_l^{mk})^{\beta_{mk}}$$ 
where $\alpha_{mk} \in \{0,1,\ldots n-1\}$ and $\beta_{mk} \in \{ 0,1,\ldots n\}$ and $\lambda_l^{mk},d_l$  are drawn independently from a continuous distribution.
The same iterative argument as in section \ref{proof:dofach} can be used. i.e. expanding the corresponding determinant along the first row, the linear independence condition boils down to one of the following occurring with non-zero probability
\begin{enumerate}
\item $d_1$ being equal to one of the roots of a linear equation
\item The coefficients of the above mentioned linear equation being equal to zero
\end{enumerate}
Thus the iterative argument can be extended here, stripping the last row and last column at each iteration and the linear independence condition can be shown to be equivalent to the linear independence of a $n^N \times n^N$ matrix whose rows are of the form
$\prod_{(m,k) \in \{2,3  \ldots K\}, m \neq k} (\lambda_l^{mk})^{\alpha_{mk}}$
where $\alpha_{pq} \in \{0,1,\ldots n-1\}$.
Note that this matrix is a more general version of the Vandermonde matrix obtained in section \ref{proof:dofach}. So the argument for the $K=3$ case does not extend here. However, the iterative procedure which eliminated the last row and the last column at each iteration, can be continued. For example, expanding the determinant along the first row, the singularity condition simplifies to one of 
\begin{enumerate}
\item $\lambda_l^{mk}$ being equal to one of the roots of a finite degree polynomial
\item The coefficients of the above mentioned polynomial being equal to zero
\end{enumerate}
Since the probability of condition $1$ occurring is $0$, condition $2$ must occur with non-zero probability. Condition $2$ leads to a polynomial in another random variable $\lambda^{pq}_l$ and thus the iterative procedure can be continued until the linear independence condition is shown to be equivalent almost surely to a $1 \times 1$ matrix being equal to $0$. Assuming, without loss of generality, that we placed the $\xw$ in the first row (this corresponds to the term $\alpha_{mk} = 0, \forall (m,k)$), the linear independence condition boils down to the condition that $ 1 = 0 $ with non-zero probability - an obvious contradiction. Thus the matrix
$$\left[\barxH^{[11]} \barxV^{[1]}~~~~\barxH^{[12]} \barxV^{[2]}\right]$$
can be shown to be non-singular with probability $1$.

Similarly, the desired signal can be chosen to be linearly independent of the interference at all other receivers almost surely. Thus $(\frac{(n+1)^N}{(n+1)^N+n^N},\frac{n^N}{(n+1)^N+n^N},\cdots,\frac{n^N}{(n+1)^N+n^N})$ lies in the degrees of freedom region of the $K$ user interference channel and therefore, the $K$ user interference channel has $K/2$ degrees of freedom.

\section{Proof of Theorem \ref{theorem:MIMOint} for $M$ even}\label{app:proofeven}
\begin{proof}
The outerbound is straightforward as before.
To prove achievability we first consider the case when $M$ is even. Through an achievable scheme, we show that there are $M/2$ non-interfering paths between transmitter $i$ and receiver $i$ for each $i=1,2,3$ resulting in a total of $3M/2$ paths in the network.

Transmitter $i$ transmits message $W_{i}$ for receiver $i$ using $M/2$ independently encoded streams over vectors $\mathbf{v}^{[i]}$ i.e
$$\xX^{[i]}(t) = \displaystyle\sum_{m=1}^{M/2}x^{[i]}_m(t) \mathbf{v}_m^{[i]} = \xV^{[i]} \xX^{i}(t), i=1,2,3$$
The signal received at receiver $i$ can be written as
$$ \xY^{[i]}(t) = \xH^{[i1]}\xV^{[1]} \xX^1(t) + \xH^{[i2]} \xV^{[2]} \xX^2(t) + \xH^{[i3]}\xV^{[3]} \xX^3(t) + \xZ_i(t)$$
All receivers cancel the interference by zero-forcing and then decode the desired message. To decode the $M/2$ streams along the column vectors of $\xV^{[i]}$ from the $M$ components of the received vector, the dimension of the interference has to be less than or equal to $M/2$. 
The following three interference alignment equations ensure that the dimension of the interference is equal to $M/2$  at all the receivers. 


\begin{eqnarray}
\label{achintM:rxfirst}
  \spa(\xH^{[12]} \xV^{[2]}) &=& \spa(\xH^{[13]} \xV^{[3]})\\
  \xH^{[21]} \xV^{[1]} &=& \xH^{[23]} \xV^{[3]}\\
  \xH^{[31]}\xV^{[1]} &=& \xH^{[32]} \xV^{[2]}
\label{achintM:rxlast}
\end{eqnarray}
where $\spa(\mathbf{A})$ represents the vector space spanned by the column vectors of matrix $\mathbf{A}$
We now wish to choose $\xV^{[i]}, i=1,2,3$ so that the above equations are satisfied. Since $\xH^{[ij]},i,j \in \{1,2,3\}$ have a full rank of $M$ almost surely, the above equations can be equivalently represented as

\begin{eqnarray}
\label{achintM:rxfirst_eq}
  \spa(\xV^{[1]}) &=& \spa(\mathbf{E} \xV^{[1]})\\
  \xV^{[2]} &=& \mathbf{F} \xV^{[1]}\\
  \xV^{[3]} &=& \mathbf{G} \xV^{[1]}
\label{achintM:rxlast_eq}
\end{eqnarray}
where 
\begin{eqnarray*}
\mathbf{E} &=&(\xH^{[31]})^{-1} \xH^{[32]} (\xH^{[12]})^{-1} \xH^{[13]} (\xH^{[23]})^{-1} \xH^{[21]} \\
\mathbf{F} &=&(\xH^{[32]})^{-1} \xH^{[31]}\\
\mathbf{G} &=&(\xH^{[23]})^{-1} \xH^{[21]}
\end{eqnarray*}

Let $\xe_1, \xe_2, \ldots \xe_M$ be the $M$ eigenvectors of $\mathbf{E}$. Then we set $\xV_1$ to be
$$ \xV^{[1]} = \begin{array}{ccc} [\xe_1 & \ldots & \xe_{(M/2)}]\end{array} $$
Then $\xV^{[2]}$ and $\xV^{[3]}$ are found using equations (\ref{achintM:rxfirst_eq})-(\ref{achintM:rxlast_eq}). Clearly, $\xV^{[i]}, i=1,2,3$ satisfy the desired interference alignment equations (\ref{achintM:rxfirst})-(\ref{achintM:rxlast}).
Now, to decode the message using zero-forcing, we need the desired signal to be linearly independent of the interference at the receivers. For example, at receiver $1$, we need the columns of $\xH^{[11]} \xV^{[1]}$ to be linearly independendent with the columns of $\xH^{[21]} \xV^{[2]}$ almost surely. i.e we need the matrix below to be of full rank almost surely
$$ \left[\begin{array}{cc} \xH^{[11]} \xV^{[1]} & \xH^{[12]} \xV^{[2]} \end{array}\right] $$
Substituting values for $\xV^{[1]}$ and $\xV^{[2]}$ in the above matrix, and multiplying by full rank matrix $(\xH^{[11]})^{-1}$, the linear independence condition is equivalent to the condition that the column vectors of
$$  \left[ \begin{array}{ccccc} \xe_1 & \xe_2 & \ldots \xe_{(M/2)} & \mathbf{K} \xe_1 & \ldots \mathbf{K} \xe_{(M/2)} \end{array} \right]$$
are linearly independent almost surely, where $\mathbf{K} = (\xH^{[11]})^{-1}\xH^{[12]} \mathbf{F}$.  

This is easily seen to be true because $\mathbf{K}$ is a random (full rank) linear transformation. To get an intuitive understanding of the linear independence condition, consider the case of $M=2$. Let $\mathcal{L}$ represent the line along which lies the first eigenvector of the random $2\times2$ matrix $\mathbf{E}$. The probability of a random rotation (and scaling) $\mathbf{K}$ of $\mathcal{L}$ being collinear with $\mathcal{L}$ is zero. 


Using a similar argument, we can show that matrices 
\begin{eqnarray*} \left[\begin{array}{cc} \xH^{[22]} \xV^{[2]} & \xH^{[21]} \xV^{[1]} \end{array}\right] &\mbox{and}&
 \left[\begin{array}{cc} \xH^{[33]} \xV^{[3]} & \xH^{[31]} \xV^{[1]} \end{array}\right] \end{eqnarray*}  have a full rank of $M$ almost surely and therefore receivers $2$ and $3$ can decode the $M/2$ streams of $\xV^{[2]}$ and $\xV^{[3]}$ using zero-forcing. Thus, a total $3M/2$ interference free transmissions per channel-use are achievable with probability $1$ and the proof is complete.
\end{proof}
\section{Proof of Theorem \ref{theorem:MIMOint} for $M$ odd}\label{app:proofodd}
\begin{proof}
Consider a two time-slot symbol extension of the channel, with the same chanel coefficients over the two symbols. It can be expressed as 
\begin{eqnarray*} \barxY^{[k]} = \barxH^{[k1]} \barxX^{[1]} + \barxH^{[k2]} \barxX^{[2]} + \barxH^{[k3]}\xX^{[3]} + \barxZ^{[k]} & i=1,2,3 \end{eqnarray*}
where $\barxX^{[i]}$ is a $2M \times 1$ vector that represents the two symbol extension of the transmitted $M\times1$ symbol symbol $\xX^{[k]}$, i.e 
$$ \barxX^{[k]}(t) \define \left[ \begin{array}{c} \xX^{[k]}(1,2t+1) \\ \xX^{[k]}(1,2t+2) \end{array} \right]$$ where $\xX^{[k]}(t)$ is an $M\times1$ vector representing the vector transmitted at time slot $t$ by transmitter $k$. Similarly $\barxY^{[k]}$ and $\barxZ^{[k]}$ represent the two symbol extensions of the the received symbol $\xY^{[k]}$ and the noise vector $\xZ^{[k]}$ respectively at receiver $i$.
$\barxH^{[ij]}$ is a $2M\times2M$ block diagonal matrix representing the extension of the channel.
$$ \barxH^{[ij]} \define \left[ \begin{array}{cc}  \xH^{[ij]}(1) & 0 \\
	0 & \xH^{[ij]}(1) \end{array}\right] $$
We will now show $(M,M,M)$ lies in the degrees of freedom region of this extended channel channel with an achievable scheme, implying that that a total of $3M/2$ degrees of freedom are achievable over the original channel.
Transmitter $k$ transmits message $W_{i}$ for receiver $i$ using $M$ independently encoded streams over vectors $\mathbf{v}^{[k]}$ i.e
$$\barxX^{[k]} = \displaystyle\sum_{m=1}^{M}x^{[k]}_m \mathbf{v}_m^{[k]} = \barxV^{[k]} \xX^{[k]}$$
where $\barxV^{[k]}$ is a $2M \times M$ matrix and $\barxX^[k]$ is a $M\times 1$ vector representing $M$ independent streams.
The following three interference alignment equations ensure that the dimension of the interference is equal to $M$  at receivers $1$,$2$ and $3$.
\begin{eqnarray}
\label{achintM:rx123odd}
  \mbox{rank}[\barxH^{[21]} \barxV^{[2]}] &=&  \mbox{rank}[\barxH^{[31]} \barxV^{[3]}]\\
 \barxH^{[12]} \barxV^{[1]} &=& \barxH^{[32]} \barxV^{[3]}\\
 \barxH^{[13]} \barxV^{[1]} &=& \barxH^{[23]} \barxV^{[2]}
\end{eqnarray}
The above equations imply that 
\begin{eqnarray}
\label{achintM:rxfirstodd}
  \spa(\barxV^{[1]}) &=& \spa(\mathbf{\bar{E}} \barxV^{[1]})\\
  \barxV^{[2]} &=& \mathbf{\bar{F}} \barxV^{[1]}\\
  \barxV^{[3]} &=& \mathbf{\bar{G}} \barxV^{[1]}
\label{achintM:rxlastodd}
\end{eqnarray}
where 
\begin{eqnarray*}
\mathbf{E} &=&(\xH^{[13]})^{-1} \xH^{[23]} (\xH^{[21]})^{-1} \xH^{[31]} (\xH^{[32]})^{-1} \xH^{[12]} \\
\mathbf{F} &=&(\xH^{[13]})^{-1} \xH^{[23]}\\
\mathbf{G} &=&(\xH^{[12]})^{-1} \xH^{[32]}
\end{eqnarray*}
and $\mathbf{\bar{E}}$, $\mathbf{\bar{F}}$ and $\mathbf{\bar{G}}$  are $2M\times 2M$ block-diagonal matrices representing the $2M$ symbol extension of $\mathbf{E}$, $\mathbf{F}$ and $\mathbf{G}$ respectively.
Let $\xe_1, \xe_2,\ldots,\xe_M,$ be the eigen vectors of $\mathbf{E}$. Then, we pick $\barxV^{[1]}$ to be 
\begin{equation} 
\label{achintM:oddV1} 
\barxV^{[1]} = \left[ \begin{array}{cccccc} 
\xe_1 & 0 & \xe_3 & \ldots & 0 & \xe_{M} \\ 
0 & \xe_2 & 0 & \ldots & \xe_{M-1} & \xe_{M} \end{array} \right] \end{equation}
As in the even $M$ case, $\barxV^{[2]}$ and $\barxV^{[3]}$ are then determined by using equations (\ref{achintM:rxfirstodd})-(\ref{achintM:rxlastodd}). 

Now, we need the desired signal to be linearly independent of the interference at all the receivers. At receiver $1$, the desired linear independence condition boils down to 
\begin{eqnarray*}
 \spa(\barxV^{[1]}) \cap \spa(\mathbf{\bar{K}} \barxV^{[1]}) &=& \{0\}
\end{eqnarray*}
where $\mathbf{K} = (\xH^{11})^{-1}\xH^{[21]} (\mathbf{{F}})^{-1}$ and $\mathbf{\bar{K}}$ is the two-symbol diagonal extension of $\mathbf{K}$. Notice that $\mathbf{K}$ is an $M\times M$ matrix. The linear independence condition is equivalent to saying that all the columns of the following $2M \times 2M$ matrix are independent.
\begin{equation} \label{achintM:linindodd} \left[ \begin{array}{cccccccccccc}
\xe_1 & 0 & \xe_3 & \ldots & 0 & \xe_{M}& \mathbf{K} \xe_1 & 0 & \mathbf{K} \xe_3 & \ldots & 0 & \mathbf{K} \xe_{M}\\
 0 & \xe_2 & 0 & \ldots &  \xe_{M-1} & \xe_{M}  & 0 &  \mathbf{K} \xe_2 & 0 & \ldots &  \mathbf{K} \xe_{M-1} & \mathbf{K}  \xe_{M}  \\
 \end{array} \right]\end{equation}
We now argue that the probability of the columns of the above matrix being linearly dependent is zero. Let $\mathbf{c}_i, i=1,2 \ldots 2M$ denote the columns of the above matrix. Suppose the columns $\mathbf{c}_i$ are linearly dependent, then
\begin{eqnarray*} \exists \alpha_i & \mbox{s.t} & \displaystyle\sum_{i=1}^{2M} \alpha_i \mathbf{c}_i = 0 \end{eqnarray*}
Let
\begin{eqnarray*}
 \mathbf{P} &=& \{\xe_1, \xe_3 \ldots \xe_{M-2}, \mathbf{K} \xe_1, \ldots \mathbf{K} \xe_{M-2}\} \\ 
 \mathbf{Q} &=& \{\xe_2, \xe_4 \ldots \xe_{M-1}, \mathbf{K} \xe_2, \ldots \mathbf{K} \xe_{M-1}\}
\end{eqnarray*}
Now, there are two possibilities
\begin{enumerate}
\item $\alpha_{M} = \alpha_{2M} = 0$. This implies that either one of the following sets of vectors is linearly dependent.
Note that both sets are can be expressed as the union of
\begin{enumerate}
\item A set of $\lfloor(M/2)\rfloor$ eigen vectors of $\mathbf{E}$ 
\item A random transformation $\mathbf{K}$ of this set.
\end{enumerate}
An argument along the same lines as the even $M$ case leads to the conclusion that the probability of the union of the two sets listed above being linearly dependent in a $M$ dimensional space is zero.
\item $\alpha_{2M} \neq 0$ or $\alpha_{M} \neq 0$
This implies that
$$\alpha_{M} \xe_M + \alpha_{2M}\mathbf{K} \xe_{M} \in \spa( \mathbf{P}) \cap \spa(\mathbf{Q}) $$
$$ \Rightarrow \spa(\{ \mathbf{K} \xe_{M}, \xe_{M}\})  \cap \spa( \mathbf{P}) \cap \spa(\mathbf{Q}) \neq \{0\}$$
Also
$$\mbox{rank}(\spa({\mathbf{P}}) \cup \spa(\mathbf{Q})) = \mbox{rank}(\mathbf{P}) + \mbox{rank} (\mathbf{Q}) - \mbox{rank}( \mathbf{P} \cap \mathbf{Q} )$$
$$\Rightarrow \mbox{rank}( \mathbf{P} \cap \mathbf{Q} ) = 2M-2-\mbox{rank}(\spa({\mathbf{P}}) \cup \spa(\mathbf{Q})) $$
Note that $\mathbf{P}$ and $\mathbf{Q}$ are $M-1$ dimensional spaces. (The case where their dimensions are less than $M-1$ is handled in the first part). Also, $\mathbf{P}$ and $\mathbf{Q}$ are drawn from completely different set of vectors. Therefore, the union of $\mathbf{P},\mathbf{Q}$ has a rank of $M$ almost surely.  Equivalently $\spa(\mathbf{P}) \cap \spa(\mathbf{Q})$ has a dimension of $M-2$ almost surely. Since the set $\{ \xe_M, \mathbf{K} \xe_M\}$ is drawn from an eigen vector $\xe_M$ that does not exist in either $\mathbf{P}$ or $\mathbf{Q}$, the probability of the $2$ dimensional space $\spa(\{\xe,\mathbf{K} \xe_M\})$ intersecting with the $M-2$ dimensional space $\mathbf{P} \cap \mathbf{Q}$ is zero.
For example, if $M=3$, let $L$ indicate the line formed by the intersection of the the two planes $\spa(\{\xe_1,\mathbf{K} \xe_1\})$ and $\spa(\{\xe_2, \mathbf{K} \xe_2\})$. The probability that line $L$ lies in the plane formed by $ \spa(\{\xe_{3}, \mathbf{K} \xe_3\})$.
Thus, the probability that the desired signal lies in the span of the interference is zero at receiver $1$. Similarly, it can be argued that the desired signal is independent of the interference at receivers $2$ and $3$ almost surely. Therefore $(M,M,M)$ is achievable over the two-symbol extended channel. Thus $3M/2$ degrees of freedom are achievable over the $3$ user interference channel with $M$ antenna at each transmitting and receiving node.
\end{enumerate}
\end{proof}
\bibliographystyle{ieeetr}
\bibliography{Thesis}
\end{document}